\definecolor{equationcolor}{RGB}{108,153,224}
\definecolor{refcolor}{RGB}{214,86,86}
\definecolor{changescolor}{RGB}{188, 104, 104}
\def\blfootnote{\gdef\@thefnmark{}\@footnotetext}
\renewcommand{\v}[1]{\ensuremath{\boldsymbol #1}}
\newcommand{\ms}[1]{\textsf{#1}}
\newcommand{\iden}{\mathbbm{1}}
\newcommand{\E}[1]{\mathcal{E}}
\def\E{ {\cal E} }
\newenvironment{sproof}{%
  \proof}{\endproof}
\newtheorem{thm}{Theorem}
\newtheorem*{res*}{Result}
\newtheorem{lem}[thm]{Lemma}
\newtheorem{cor}[thm]{Corollary}
\begin{document}

\title{Robust Bell Nonlocality from Gottesman–Kitaev–Preskill States}

\author{Xiaotian Yang}
	\affiliation{Center for Macroscopic Quantum States bigQ, Department of Physics,
Technical University of Denmark, Fysikvej 307, 2800 Kgs. Lyngby, Denmark}
\author{Santiago Zamora}
\affiliation{Physics Department, Federal University of Rio Grande do Norte, Natal, 59072-970, Rio Grande do Norte, Brazil}
\affiliation{International Institute of Physics, Federal University of Rio Grande do Norte, 59078-970, Natal, RN, Brazil}
\author{Rafael Chaves}
    \affiliation{International Institute of Physics, Federal University of Rio Grande do Norte, 59078-970, Natal, RN, Brazil}
    \affiliation{School of Science and Technology, Federal University of Rio Grande do Norte, Natal, Brazil}
\author{Ulrik L. Andersen}
	\affiliation{Center for Macroscopic Quantum States bigQ, Department of Physics,
Technical University of Denmark, Fysikvej 307, 2800 Kgs. Lyngby, Denmark}
\author{Jonatan Bohr Brask}
    \affiliation{Center for Macroscopic Quantum States bigQ, Department of Physics,
Technical University of Denmark, Fysikvej 307, 2800 Kgs. Lyngby, Denmark}
\author{A. de Oliveira Junior}
\email{alexssandredeoliveira@gmail.com}
	\affiliation{Center for Macroscopic Quantum States bigQ, Department of Physics,
Technical University of Denmark, Fysikvej 307, 2800 Kgs. Lyngby, Denmark}
\date{\today}

\begin{abstract}
    Bell tests based on homodyne detection are strongly constrained in continuous-variable systems. Can Gottesman–Kitaev–Preskill (GKP) encoding turn homodyne detection into a practical tool for revealing Bell nonlocality? We consider a physically motivated model in which each party performs homodyne detection and digitizes the continuous outcome via a fixed periodic binning, corresponding to logical Pauli measurements. Within this framework, we derive a bipartite no-go: CHSH cannot be violated for Bell-pair states. Moving beyond two parties, we show that finitely squeezed GKP-encoded GHZ and W states nevertheless exhibit strong multipartite nonlocality, violating multipartite Bell inequalities with homodyne-only readout. We quantify the required squeezing thresholds and robustness to loss, providing a route toward homodyne-based Bell tests in continuous-variable systems.
\end{abstract}

\maketitle

\emph{Introduction--} The violation of Bell inequalities decrees the death knell for local realism~\cite{Einstein1935,Bell1964,Bell1966OnTP,Brunner2014}, while simultaneously providing an operational witness for certifying non-classicality in quantum technologies~\cite{Ekert1991,Valerio2007,Pironio2010,Vazirani2014}. Making this witness experimentally meaningful, however, depends on the measurement scheme: if many events are lost due to detector inefficiencies, local-realistic models can reproduce the observed statistics, reopening the detection loophole~\cite{Larsson2014}. This raises the question of which measurements can reveal Bell nonlocality under realistic constraints.

Homodyne detection is a natural candidate in this respect, as it can reach very high efficiencies and is therefore attractive for achieving loophole-free Bell tests~\cite{Patron2004,Eberhard93}. However, it also comes with a fundamental limitation because Bell nonlocality from Gaussian measurements on Gaussian states is impossible~\cite{Bell2004,JOHANSEN1997173}. Hence, non-Gaussian states must be used. While photon-subtracted states or optical cat states can circumvent this restriction~\cite{Wenger2003,GarciaPatron2004,GarciaPatron2005}, their non-Gaussian features are fragile under loss, limiting their viability in realistic implementations. Such tests also require discretizing the continuous measurement outcomes into a finite set of results~\cite{GarciaPatron2004,GarciaPatron2005}. This step is unavoidable when testing standard Bell inequalities, but it has nontrivial consequences: in many proposals the binning is optimized for a given state or measurement setting, which can significantly affect both the size and the robustness of the observed violation~\cite{Wenger2003,Acin2009,Oudot2024,Carlos2026}. A fundamentally different and more robust route to address these issues is provided by the Gottesman--Kitaev--Preskill (GKP) encoding~\cite{GKP2001,Glancy2006,Grimsmo2021,Tzitrin2020}.

\begin{figure}[t]
    \centering
    \includegraphics{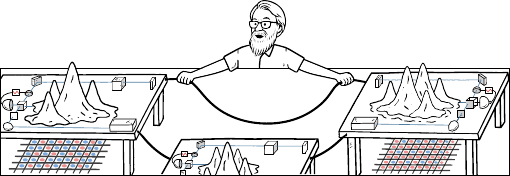}
    \caption{\emph{Bell nonlocality with Gottesman–Kitaev–Preskill States}. Nonlocal correlations can be pulled out of continuous-variable systems using GKP states and only homodyne detection.}
        \label{F-Bells-violation}
\end{figure}

Originally conceived for fault-tolerant quantum computing, GKP states encode qubits in one bosonic mode using a grid structure in phase space. This geometry leverages the oscillator’s infinite-dimensional Hilbert space to provide an inherent redundancy that would otherwise require hundreds of discrete physical qubits~\cite{Noh2020}. Crucially, the GKP framework allows for universal operations, error correction, and measurements using only Gaussian interactions~\cite{Menicucci2014}. Driven by the need for scalable quantum architectures, GKP states have been realised across diverse platforms, including trapped ions~\cite{Flhmann2019,Hastrup2021}, microwave cavities~\cite{CampagneIbarcq2020}, and integrated photonic sources~\cite{Hussler2025,Larsen2025Nature}.

\begin{figure*}
    \centering
    \includegraphics{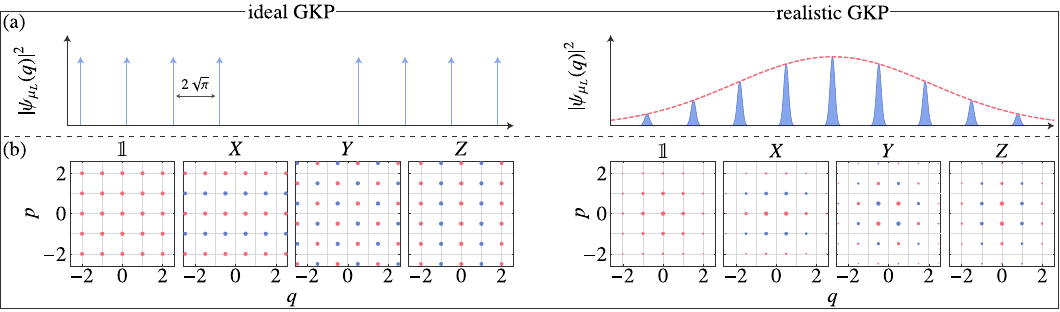}
    \caption{\emph{Ideal vs finite-energy GKP structure in phase space}. Comparison between ideal and finite-energy (``realistic'') square GKP encodings in phase space and in the position quadrature. (a) sketch of the position-space marginal distribution showing that the ideal GKP code words form an infinitely sharp Dirac-comb lattice, while finite-energy GKP states replace each spike by a narrow Gaussian peak and suppress distant peaks under a broad Gaussian envelope (dashed curve), giving a normalizable state with finite mean energy. (b) Phase-space dot/contour plots of the Wigner functions associated with the logical GKP Pauli operators $\iden, X, Y$ and $Z$. Red (blue) marks positive (negative) regions, with the dot size indicating the relative magnitude of each lattice peak.}
    \label{F-wigner-function-GKP-sketch}
\end{figure*}
Can the GKP encoding unlock robust Bell violations with homodyning? A first step toward bridging non-locality and the GKP code was taken via a device-independent quantum cryptography protocol~\cite{Marshall2014}. Alice generates an entangled GKP state and shares one half with Bob to establish a secret key based on a CHSH violation. The drawback of this approach is that the violation requires a non-Gaussian T-gate, whose experimental implementation is challenging. This leads us to ask: given a GKP encoding, can we violate a Bell inequality using only Gaussian operations? Realistic GKP states are approximate, with finite squeezing and sensitivity to loss that can blur their grid structure. Whether this residual, noise-degraded non-Gaussianity is sufficient to overcome the homodyne no-go and yield an observable Bell violation has been unclear. In this work, we show that GKP states enable sizeable Bell violations using only homodyne detection, even in the presence of finite squeezing and noise~(see Fig.~\ref{F-Bells-violation} for a summary).

\emph{Big picture--} We characterise Bell nonlocality under a strict measurement constraint: each party performs homodyne detection of a quadrature, and the continuous outcome is converted into a binary result by a fixed binning rule. While the shared state may be non‑Gaussian and highly entangled, this restriction yields a fundamental bipartite obstruction: Pauli measurements\footnote{Here ``Pauli measurements'' means measurements of the three Pauli observables $X, Y, Z$ (not general qubit observables).} on Bell states generate only local correlations in the CHSH scenario (see Appendix~\ref{App:nogo-chsh-pauli}). This motivates our focus on multipartite Bell inequalities, for which Pauli measurements already lead to strong (and often optimal) violations. In quantum theory, Bell violations arise from entanglement probed by incompatible local measurements. We now introduce the two main ingredients of our continuous-variable implementation: the shared entangled resource and the local measurement settings. The resource is provided by embedding a logical qubit into a single bosonic mode using GKP grid states, while the measurement choices are implemented using homodyne detection followed by a fixed binary post-processing (binning) of the continuous outcome.

\emph{Gottesman--Kitaev--Preskill as a resource--} We use the square-lattice Gottesman--Kitaev--Preskill (GKP) encoding, which embeds a logical qubit into a single bosonic mode~\cite{GKP2001,Grimsmo2021}. In the $\hat q$-eigenbasis, the ideal (non-normalisable) logical codewords are Dirac-comb superpositions,
\begin{align}\label{Eq:GKP-basis}
\ket{0_{L}} &\propto \sum_{s\in\mathbb{Z}} \ket{2s\sqrt{\pi}}_{q}, \\
\ket{1_{L}} &\propto \sum_{s\in\mathbb{Z}} \ket{(2s+1)\sqrt{\pi}}_{q},
\end{align}
where $\ket{\xi}_q$ denotes the (generalised) eigenket of $\hat q$ with eigenvalue $\xi$. These ideal states are unphysical,
as they require infinite energy.

To model realistic GKP states, we adopt the finite-energy approximation generated by the energy filter $e^{-\epsilon\hat n}$~\cite{Matsuura2020,Girvin2020,Hastrup2022,Hastrup2023}, where $\hat n=\hat a^\dagger\hat a=\tfrac12(\hat q^2+\hat p^2-1)$. This regularisation replaces the Dirac comb by a lattice of narrow Gaussian peaks under a broader Gaussian envelope (see Fig.~\hyperref[F-wigner-function-GKP-sketch]{\ref{F-wigner-function-GKP-sketch}a} for an illustration and Appendix~\ref{App-finite-energy-GKP-state} for details). For $\epsilon>0$ the resulting codewords are normalisable and denoted by $\ket{\mu_L^\epsilon}$ with $\mu\in\{0,1\}$. The variance of the individual peaks (in the quadrature marginals) is $\sigma^2=\tfrac12\tanh\epsilon$. We quantify the grid quality by the effective noise/squeezing parameter (in dB) $r_{\mathrm{dB}} := -10\log_{10}\bigl(\tanh\epsilon\bigr)$, so that larger $r_{\mathrm{dB}}$ corresponds to sharper peaks. The same regularisation also contracts the lattice by a factor $\sech\epsilon$ and introduces an envelope variance that ensures finite mean energy.

An arbitrary logical $N$-qubit state $\rho_L$ is encoded into $N$ bosonic modes via the map $V_\epsilon^{\otimes N}$, where \mbox{$V_\epsilon := \ket{0_L^\epsilon}\!\Bigl\langle0\Bigl| + \ket{1_L^\epsilon}\!\Bigl\langle1\Bigl|$}. Since the finite-energy codewords are not exactly orthogonal, $V_\epsilon$ is generally not an isometry. Consequently the map $\rho_L\mapsto V_\epsilon^{\otimes N}\rho_LV_\epsilon^{\dagger\otimes N}$ is not trace-preserving. To write the encoded state explicitly, we define the (unnormalized) encoded Pauli operators $\sigma_k^\epsilon := V_\epsilon \sigma_k V_\epsilon^\dagger$ with $k=0,1,2,3$, where $\sigma_0=\iden$ and $\sigma_{1,2,3} = (X,Y,Z)$ are the usual Pauli matrices. The resulting physical $N$-mode state is 
\begin{equation}\label{Eq:N-multipartite-state}
    \rho_L^\epsilon =\frac{1}{\mathcal N}\sum_{\v k\in\{0,1,2,3\}^N} c_{\v k}\; \sigma_{k_1}^\epsilon\otimes\cdots\otimes\sigma_{k_N}^\epsilon,
\end{equation}
where $c_{\v k}:=\tr\left[\rho_L\left(\sigma_{k_1}\otimes\cdots\otimes\sigma_{k_N}\right)\right]$ and the normalization $\mathcal N$ ensures $\tr(\rho_L^\epsilon)=1$.

In the energy-filter model, the operators $\sigma_k^\epsilon$ inherit the ``grid'' structure of GKP states: in phase space they can be written as a sum of Gaussian peaks centred on a (contracted) square lattice, each peak weighted by a broad Gaussian envelope (see Fig.~\hyperref[F-wigner-function-GKP-sketch]{\ref{F-wigner-function-GKP-sketch}b} and Appendix~\ref{App-finite-energy-GKP-state}). Different $k$ correspond to different parity sublattices of this grid and to an alternating $\pm$ interference pattern between peaks. To encode this compactly, we label lattice sites by $\v m=(m_1,m_2)\in\mathbb Z^2$, define the envelope weight $c_{\v m}^\epsilon=\exp\left[-\frac{\pi}{4}\tanh(\epsilon)\,(m_1^2+m_2^2)\right]$, and introduce the lattice cosets
\begin{align}\label{Eq:Mk-sets}
\mathcal M_k = \Bigl\{\v m \equiv \qty(k,\tfrac{k(k+1)}{2})\pmod 2\Bigr\}
\end{align}
together with sign patterns $s_k(\v m)\in\{\pm1\}$ on $\mathcal M_k$. One convenient compact form is $s_k(\v m)=(-1)^{\frac12\left[\delta_{k,1}m_2+\delta_{k,3}m_1+\delta_{k,2}(m_1+m_2)\right]}$, where $\delta_{k,\ell}$ denotes the Kronecker delta.

This encoding is naturally compatible with homodyne-based readout. Homodyne detection measures the rotated quadrature $\hat q_\theta\!=\!\hat q\cos\theta+\hat p\sin\theta$ and returns a real outcome $q_\theta\in\mathbb{R}$. A measurement setting $x$ fixes an angle $\theta_x$, and the corresponding two-outcome POVM elements are \mbox{$M_o^{(x)} := \int_{R_o} \mathrm{d}q_{\theta_x} \ketbra{q_{\theta_x}}$} for $o=0,1$, where the binning regions defined as the periodic unions of half-open intervals
\begin{equation}\label{Eq:binning}
    R_o := \bigcup_{s\in\mathbb Z}\left[(2s+o)\sqrt{\pi}-\frac{\sqrt{\pi}}{2}, (2s+o)\sqrt{\pi}+\frac{\sqrt{\pi}}{2}\right).
\end{equation}
We assign the bit outcome $o$ according to whether $q_{\theta_x}\in R_o$. In the ideal GKP limit, the binning sets $R_o$ are aligned with the code lattice in the measured quadrature: measuring $\hat q$ (i.e., $\theta=0$) separates the two position-comb cosets and thus realises an effective logical $Z$, while measuring $\hat p$ (i.e., $\theta=\tfrac\pi2$) gives the complementary effective logical $X$ in the conjugate quadrature. A diagonal homodyne angle $\theta=\tfrac\pi4$ probes the lattice along the $q=p$ direction and yields the corresponding effective logical $Y$. Accordingly, we take $Z$ to correspond to $\theta_Z=0$, $X$ to $\theta_X=\tfrac{\pi}{2}$, and $Y$ to $\theta_Y=\tfrac{\pi}{4}$.

\emph{Multipartite Bell-test \& GKP encoding--} We consider an $N$-party Bell experiment with $N\ge 3$, where each party holds a single bosonic mode encoding one logical qubit. In each run, party $j$ receives an input $x_j\in\{0,1\}$ selecting one of two homodyne angles $\theta_{x_j}$ and measures the corresponding quadrature $\hat q_{\theta_{x_j}}$. The continuous outcome is then discretized into a bit $o_j\in\{0,1\}$ using the fixed periodic binning rule defined in Eq.~\eqref{Eq:binning}. The experiment is thus described by the conditional probabilities $p(\v o|\v x)$. A behaviour $p(\v o|\v x)$ is Bell-local if it admits a local-hidden-variable decomposition \mbox{$p(\v o|\v x)=\int \mathrm{d}\lambda\, p(\lambda)\prod_{j=1}^N p_j(o_j\mid x_j,\lambda)$}, and any violation of a Bell inequality derived from this assumption certifies multipartite Bell nonlocality.

Our central technical ingredient is an explicit expression for the full $N$-party behaviour generated by any finite-energy GKP-encoded state under this measurement scheme. Let $\rho_L^\epsilon$ be the encoded state in Eq.~\eqref{Eq:N-multipartite-state} and define the product POVM $M_{\v o}^{(\v x)}:=\bigotimes_{j=1}^N M_{o_j}^{(x_j)}$, where $M_{o}^{(x)}$ is the single-mode binned-homodyne element for angle $\theta_x$. Using the Pauli-string expansion of $\rho_L^\epsilon$, the probability factorizes as
\begin{equation}\label{Eq:factorisation}
p(\v o|\v x)=\tr\left[\rho_L^\epsilon\, M_{\v o}^{(\v x)}\right] =\frac{1}{\mathcal N}\sum_{\v k\in\{0,1,2,3\}^N} c_{\v k}\prod_{j=1}^N t_{k_j,o_j}^{(x_j)},
\end{equation}
where $c_{\v k}$ are the logical Pauli coefficients from Eq.~\eqref{Eq:N-multipartite-state} and $t_{k,o}^{(x)}:=\tr\left(M_o^{(x)}\sigma_k^\epsilon\right)$ is a single-mode overlap. This factorisation reduces the $N$-mode behaviour to single-mode overlaps. Their explicit form follows from the Gaussian-mixture structure of the GKP Pauli operators' Wigner functions: 

\begin{res*}[Single-mode overlap]\label{res:single-overlap}
For the finite-energy GKP Pauli operators $\sigma_k^\epsilon$ and the two-outcome binned-homodyne POVM, the single-mode overlap $t_{k,o}^{(x)}$ admit the absolutely convergent error-function series
\begin{equation}\label{Eq:main-overlap}
t_{k,o}^{(x)}=(-1)^{\delta_{k,2}} \sum_{\v m\in\mathcal M_k} c_{\v m}^\epsilon\, s_k(\v m) \mathcal B_o\bigl(\mu_{\theta_x,\v m}\bigr),
\end{equation}
where $\mu_{\theta_x,\v m}=\sech(\epsilon) \frac{\sqrt{\pi}}{2}\bigl(m_1\cos\theta_x+m_2\sin\theta_x\bigr)$ is the quadrature projection of the lattice point $\v m=(m_1,m_2)\in\mathbb Z^2$, and the binning function $\mathcal B_o(\mu)$ is
\begin{equation}\label{eq:binning-function}
\mathcal B_o(\mu) =\frac12 \sum_{\ell\in\mathbb Z}\left[ \erf\bigl(a_{\ell,o}^+(\mu)\bigr)-\erf\bigl(a_{\ell,o}^-(\mu)\bigr) \right],
\end{equation}
with $a_{\ell,o}^\pm(\mu):=[\bigl(2\ell+o\pm\tfrac12\bigr)\sqrt{\pi}-\mu]/(\sqrt{2}\sigma)$ and $\sigma$ being the peak-width parameter of the finite-energy model.
\end{res*}
\begin{sproof}
    Since $M_o^{(x)}=\int_{R_o}\mathrm{d}q_{\theta_x}\ketbra{q_{\theta_x}}$, the overlap $t_{k,o}^{(x)}$ is the integral of the rotated quadrature marginal $\langle q_{\theta_x}|\sigma_k^\epsilon|q_{\theta_x}\rangle$ over $R_o$. Using the rotated Wigner-marginal identity, this marginal is a line integral of $W_{\sigma_k^\epsilon}$ along the conjugate quadrature $p_{\theta_x}$. Substituting the Gaussian-mixture form of $W_{\sigma_k^\epsilon}$ reduces the $p_{\theta_x}$ integration to a one-dimensional Gaussian for each lattice site, centred at $\mu_{\theta_x,\v m}$; integrating over the periodic bin then yields the error-function series $\mathcal B_o(\mu_{\theta_x,\v m})$. The lattice sum converges absolutely due to the Gaussian envelope $c_{\v m}^\epsilon$ (see Appendix~\ref{App-finite-energy-GKP-state}).
\end{sproof}

The above result reduces the full $N$-mode probability distribution to efficiently computable single-mode lattice sums, enabling numerics for multipartite inequalities under loss and finite squeezing.

\emph{Examples: GHZ \& W non-locality--} We now use our technical result to compute the binned-homodyne behaviour $p(\v o|\v x)$ and thus evaluate multipartite Bell witnesses tailored to two distinct entanglement structures. As logical resources we consider the $N$-qubit GHZ and $W$ states,
\begin{align}
\ket{\mathrm{GHZ}_N}_L &:= \frac{1}{\sqrt{2}}\bigl(\ket{0}_L^{\otimes N}+\ket{1}_L^{\otimes N}\bigr), \nonumber\\
\ket{W_N}_L &:= \frac{1}{\sqrt{N}}\sum_{j=1}^N \ket{\v e_j}_L ,
\end{align}
where $\ket{0}_L,\ket{1}_L$ are the logical GKP basis states and $\ket{\v e_j}_L$ denotes the computational-basis string with a single excitation at site $j$. Throughout, the physical (finite-squeezing) states used in the Bell test are the encoded versions $\rho_L^\epsilon$ obtained from these logical states via Eq.~\eqref{Eq:N-multipartite-state}. 
\begin{figure*}
    \centering
    \includegraphics{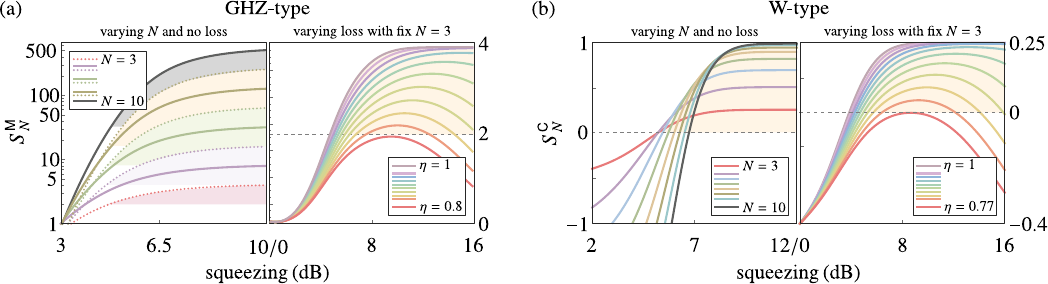}
    \caption{\emph{Multipartite Bell nonlocality with finitely squeezed GKP encoding under homodyne measurement.} (a) MABK Bell parameter $S_N^{\ms M}$ for encoded GHZ states. Left: lossless case for different numbers of parties $N$. Shaded regions indicate parameter regimes where the LHV bound $S_N^{\ms M}\le 2^{\lfloor N/2\rfloor}$ is violated. Odd $N$ are shown with dotted lines and even $N$ with solid lines, highlighting that some consecutive system sizes share the same classical bound. Right: $S_3^{\ms M}$ versus squeezing for different loss transmissivities $\eta$; the dashed horizontal line marks the LHV bound. (b) Cabello functional $S_N^{\ms C}$ for encoded $W$ states. Left: lossless case for varying $N$. Right: $N=3$ with varying $\eta$; the dashed horizontal line marks the local bound $S_N^{\ms C}\le 0$.}
    \label{F-MABK-W}
\end{figure*}

To witness the full-correlation nonlocality characteristic of GHZ states, we use the Mermin--Ardehali--Belinskii--Klyshko (MABK) family~\cite{Mermin1990,Ardehali1992,Belinskii1993}, which is optimally violated by GHZ states under Pauli measurements. Mapping bit outcomes to $\pm1$ variables $a_j:=(-1)^{o_j}$, we form the full correlators $E(\v x)=\langle\prod_{j=1}^N a_j\rangle$, computed from $p(\v o|\v x)$. The corresponding Bell parameter $S_N^{\ms M}$ is defined recursively in Appendix~\ref{App:bell-inequalities} and satisfies the local bound $S_N^{\ms M}\le 2^{\lfloor N/2\rfloor}$. In our implementation, we choose two homodyne angles corresponding to the effective logical settings $(Y,X)$.

To probe nonlocality associated with $W$-type entanglement, we use a Cabello-type probability inequality~\cite{Cabello2002,Heaney2011,chaves2011feasibility}, which is expressed directly in terms of selected joint probabilities. Restricting to two effective logical settings $(Z,X)$, we define the functional $S_N^{\mathcal{C}}$ using probabilities from only three classes of setting strings: \emph{(i)} all parties measure $Z$, \emph{(ii)} all parties measure $X$, and \emph{(iii)} for each unordered pair $(i,j)$, parties $i$ and $j$ measure $X$ while all others measure $Z$ (a total of $2+\binom{N}{2}$ configurations). The explicit expression is given in Appendix~\ref{App:bell-inequalities}. Any LHV model satisfies $S_N^{\ms C}\le 0$.
\begin{figure*}
    \centering
    \includegraphics{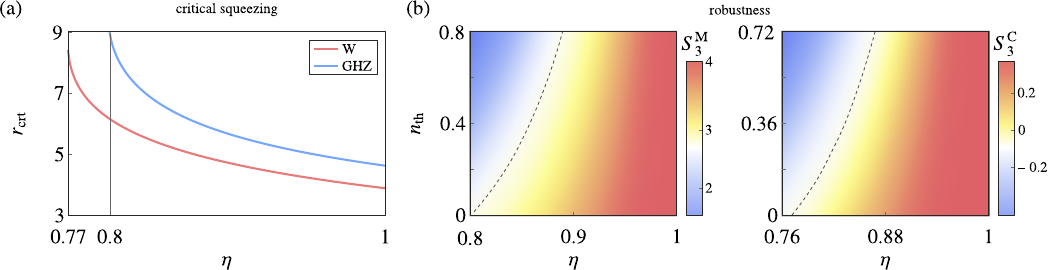}
    \caption{\emph{Critical squeezing and robustness to excess noise.} (a) Critical squeezing $r_{\mathrm{crit}}$ (dB) for observing a violation at $N=3$ versus loss transmissivity $\eta$, comparing GHZ/MABK (blue) and W/Cabello (red). (b) Robustness under thermal noise in the loss channel: the environment port is a thermal state with mean photon number $n_{\mathrm{th}}$. Heat maps show $S_3^{\ms M}$ (left) and $S_3^{\ms C}$ (right) as functions of $(\eta,n_{\mathrm{th}})$; dashed lines mark the local bounds ($S_3^{\ms M}=2$ and $S_3^{\ms C}=0$), and above them the local region is shown in blue hues.}
    \label{F-critical-robustness}
\end{figure*}

In the lossless regime, Fig.~\ref{F-MABK-W} shows that increasing the number of parties $N$ enhances the observed nonlocality for both resource families in different ways. For GHZ-type states [Fig.~\hyperref[F-MABK-W]{\ref{F-MABK-W}a}, left], the MABK Bell parameter $S_N^{\ms M}$ increases with $N$ as the effective GKP readout approaches ideal logical Pauli measurements with increasing squeezing. At the same time, the relevant classical benchmark depends on $N$, since any local-hidden-variable (LHV) model satisfies $S_N^{\ms M}\le 2^{\lfloor N/2\rfloor}$. Thus, while larger $N$ yields larger quantum values, the ``difficulty'' of violating the inequality also changes with $N$ through the growing local bound (with odd/even $N$ sharing the same bound when $\lfloor N/2\rfloor$ coincides). For W-type states [Fig.~\hyperref[F-MABK-W]{\ref{F-MABK-W}b}, left], the Cabello functional $S_N^{\ms C}$ likewise increases with $N$, but in this case the local bound is independent of system size, $S_N^{\ms C}\le 0$. Consequently, once the distribution crosses into the nonlocal region, increasing $N$ directly translates into a larger positive gap above the classical threshold, without the moving-target effect present in the MABK family. Overall, both panels confirm the same basic mechanism—higher squeezing makes the binned-homodyne measurement a more faithful approximation of logical Pauli readout--while the $N$-dependence of the classical bound leads to distinct scaling behaviours for GHZ- and W-type witnesses.

Turning to loss [right panels of Fig.~\ref{F-MABK-W}], photon loss suppresses the violations for both resources by blurring the phase-space grid and reducing the distinguishability of neighbouring peaks under quadrature readout. Throughout, we model loss in each mode by a standard beam-splitter channel of transmissivity $\eta$, which mixes the signal with vacuum at the unused port (see Appendix~\hyperref[App:loss-thermal]{C-3} for details). This degradation competes with squeezing: increasing $r_{\mathrm{dB}}$ initially strengthens the correlations, but for $\eta<1$ the loss-induced Gaussian blurring eventually dominates, leading to saturation and a decrease of the Bell value at large squeezing. To quantify performance in a way that is comparable across witnesses with different local bounds, we define the critical squeezing $r_{\mathrm{crit}}$ as the smallest squeezing for which the corresponding inequality is violated. The critical squeezing parameter captures the operational threshold on GKP quality required to certify nonlocality under fixed homodyne-and-binning readout, and it provides a direct basis for comparing robustness: for a fixed transmissivity $\eta$ (and, when relevant, fixed $N$), the resource with the smaller $r_{\mathrm{crit}}$ yields a loophole-friendly violation with less squeezing.

Figure~\ref{F-critical-robustness} makes these thresholds and robustness considerations explicit. In Fig.~\hyperref[F-critical-robustness]{\ref{F-critical-robustness}a} we plot the critical squeezing as a function of the transmissivity $\eta$ for $N=3$, which we take as a representative benchmark as it is the smallest multipartite setting where violations are possible in our measurement model. As expected, the critical squeezing decreases monotonically as $\eta\to 1$, while it grows as loss increases. The comparison between the two resources is captured directly by the ordering of the curves: the W-type witness achieves violation at lower squeezing than the GHZ-type witness. Figure~\hyperref[F-critical-robustness]{\ref{F-critical-robustness}b} further probes robustness against excess noise by replacing the vacuum environment in the loss model with a thermal state of mean photon number $n_{\mathrm{th}}$ (Appendix~\hyperref[App:loss-thermal]{C-3}). The resulting heat maps show the Bell values as functions of $(\eta,n_{\mathrm{th}})$, with dashed contours marking the classical thresholds ($S_3^{\ms M}=2$ for MABK and $S_3^{\ms C}=0$ for Cabello). In both cases, decreasing $\eta$ or increasing $n_{\mathrm{th}}$ drives the system across the contour into the local region, illustrating a clear trade-off: higher transmissivity can compensate for moderate thermal noise, while near-vacuum noise is required to tolerate stronger attenuation. Together, these plots provide an operational ``noise budget'' for homodyne-based Bell tests with finite-energy GKP resources, quantifying both the minimum squeezing needed for a violation and the range of loss and thermal background over which nonlocality remains observable.

Beyond specific inequalities, we verify the nonlocality's strength via the geometric distance to the local polytope, $\mathcal{D}(\v{p})$ \cite{Brito2018} (see Appendix~\ref{App:distance-polytope}). This analysis confirms that W-type correlations are more robust to finite squeezing than GHZ states. Crucially, increasing the measurement settings from $2$ to $3$ drastically amplifies this distance; in the lossless limit, the $3$-input behavior saturates the algebraic maximum even with finite squeezing. This implies that the nonlocality accessible via homodyne detection is significantly richer than what standard two-setting witnesses capture.

\emph{Discussion \& outlook} --  Beyond fault-tolerant quantum computing, GKP encodings are emerging as natural building blocks for photonic quantum technologies, including long-distance quantum communication and repeater/network architectures that rely predominantly on Gaussian operations and homodyne readout~\cite{Noh2019TIT,Fukui2021PRR,Rozpedek2021npjQI,Rozpedek2023PRR}. This makes it timely to understand whether GKP states can also certify Bell nonlocality under similarly experimentally friendly, homodyne-based measurements. We have shown that finitely squeezed GKP resources enable sizeable multipartite Bell violations using only homodyne detection and fixed binning, yielding a simple benchmark compatible with high-efficiency homodyne detection and realistic noise. 

Our main technical result, Eq.~\eqref{Eq:main-overlap}, establishes an explicit and efficiently computable link between finitely squeezed GKP structure and binned-homodyne statistics, that can be extended beyond Bell tests. In particular, it can be extended to prepare-and-measure protocols under homodyne measurements.~\cite{VanHimbeeck2017,Rusca2019SelfTesting,Rusca2020Homodyne,Avesani2021Heterodyne,Wang2023Homodyne,Roch2025,alves2025semideviceindependentrandomnesscertificationdiscretized}. It could provide quantitative tools for privacy certification in distributed quantum sensing, where the central question is to bound what an untrusted observer can infer from accessible data while retaining metrological performance~\cite{shettell2022,ho2024quantumprivatedistributedsensing,Hassani2025,Bugalho2025,dejong2025,junior2025privacycontinuousvariabledistributedquantum,alushi2025privacydistributedquantumsensing}.

Finally, the main experimental challenge is the scalable preparation of multipartite entangled GKP resources. For the GHZ family, comparatively straightforward at the logical level: $\ket{\text{GHZ}_N}$ is stabilizer and can be generated from GKP stabilizer inputs using Clifford operations, which are at the same time implementable with Gaussian couplings and homodyne readout in GKP architectures~\cite{Bourassa2021blueprintscalable,Baranes2023}. By constrant, $\ket{\text{W}_N}$ is nonstabilizer, so preparing it from stabilizer inputs cannot be achieved with Clifford operations alone. Establishing experimentally feasible routes to (W)-type GKP entanglement within (or close to) the Gaussian toolbox therefore remains an interesting open direction, and is closely tied to the need for logical non-Clifford resources, which in the GKP code subspace require additional non-Gaussian ingredients beyond Gaussian protocols \cite{Hahn2025}.

\emph{Acknowledgments--} We gratefully acknowledge financial support from the  EU Horizon Europe (QSNP, grant no. 101114043 \& CLUSTEC, grant no. 101080173 \& ClusterQ, grant no. 101055224), the Simons Foundation (Grant Number 1023171, RC), the Brazilian National Council for Scientific and Technological Development (CNPq, Grants No.307295/2020-6 and No.403181/2024-0), the Financiadora de Estudos e Projetos (grant 1699/24 IIF-FINEP), the Coordenação de Aperfeiçoamento de Pessoal de Nível Superior – Brasil (CAPES) – Finance Code 001, and a guest professorship from the Otto M\o nsted Foundation.  

\emph{Data availability--} All data is publicly available on GitHub at \href{https://github.com/AdeOliveiraJunior/Bell-Nonlocality-from-Gottesman-Kitaev-Preskill-Encoding.git}{Bell Nonlocality from GKP States.}

\bibliography{2-citations}
\appendix
\onecolumngrid

\newpage

\section{A no-go theorem for CHSH violation with Pauli local measurements}
\label{App:nogo-chsh-pauli}

This appendix formalises a simple obstruction to bipartite CHSH violation that arises when the available dichotomic measurements are restricted to the (logical) Pauli observables. The result captures the essential reason why, in settings where the only accessible measurements are implementations of logical $X, Y$ and $Z$ (e.g., via quadrature homodyne detection followed by GKP-aligned binning), CHSH violations cannot be observed..

Let $\mathcal{P} = \{\pm X, \pm Y, \pm Z\}$ denote the set of nontrivial Pauli observables on a qubit, each with eigenvalues $\pm 1$. The CHSH expression for a two-qubit state $\rho$ and dichotomic observables $A_0, A_1$ (Alice) and $B_0, B_1$ (Bob) is
\begin{equation}
    S_{\textrm{CHSH}}(\rho;A_0, A_1, B_0, B_1):= \tr(\rho \mathcal{B}_{\textrm{CHSH}}) \quad \text{with} \quad \mathcal{B}_{\text{CHSH}}:=A_0\otimes (B_0 + B_1) + A_1\otimes(B_0-B_1). 
\end{equation}
Equivalently, $S_{\textrm{CHSH}}(\rho;A_0, A_1, B_0, B_1) = \langle A_0 B_0\rangle+\langle A_0 B_1\rangle+\langle A_1 B_0\rangle-\langle A_1 B_1\rangle$, where $\langle A_i B_j\rangle = \tr(\rho A_i \otimes B_j)$. Any local-hidden-variable model satisfies $|S_\text{CHSH}(\rho;A_0, A_1, B_0, B_1)| \leq 2$.
Before we proceed with the main theorem, we begin with two standard lemmas. The first lemma shows that the CHSH value is invariant under local unitary conjugation, provided the observables are transformed accordingly:
\begin{lem}[Invariance under local conjugation]\label{Lem:invariance-conjugation} For any two-qubit state $\rho$, any unitaries $U_A, U_B$, and any observables $A_i, B_j$,
\begin{equation}
    S_{\text{CHSH}}\qty[(U_A\otimes U_B)\rho (U_A\otimes U_B)^\dagger; A_0, A_1, B_0, B_1] = S_{\text{CHSH}}\qty[\rho; U_A^\dagger A_0U_A, U_A^\dagger A_1 U_A, U_B^\dagger B_0 U_B, U^\dagger_B B_1 U_B].
\end{equation}
\end{lem}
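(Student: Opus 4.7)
The plan is to unfold both sides of the claimed identity as traces against the CHSH operator, and then shift the local unitaries from the state onto the observables using only cyclicity of the trace and the tensor-product structure of $\mathcal{B}_{\text{CHSH}}$. Morally, the lemma just says that conjugating the state by $U_A\otimes U_B$ is dual to conjugating each local observable by $U_A$ or $U_B$ respectively, so the strategy is to make this duality explicit at the level of the CHSH expression.

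Concretely, I would first expand the left-hand side by definition as $\tr\bigl[(U_A\otimes U_B)\rho(U_A\otimes U_B)^\dagger\, \mathcal{B}_{\text{CHSH}}(A_0,A_1,B_0,B_1)\bigr]$, using the definition of $S_{\text{CHSH}}$ given in the excerpt. Cyclic invariance of the trace then rewrites this as $\tr\bigl[\rho\,(U_A\otimes U_B)^\dagger \mathcal{B}_{\text{CHSH}}(A_0,A_1,B_0,B_1)(U_A\otimes U_B)\bigr]$. Since $\mathcal{B}_{\text{CHSH}}$ is a linear combination of four tensor-product terms of the form $\pm\, A_i\otimes B_j$, the conjugation distributes termwise via the simple-tensor identity $(U_A\otimes U_B)^\dagger(A_i\otimes B_j)(U_A\otimes U_B)=(U_A^\dagger A_i U_A)\otimes(U_B^\dagger B_j U_B)$, which itself follows from $(U\otimes V)(X\otimes Y)(U'\otimes V')=(UXU')\otimes(VYV')$ on simple tensors combined with bilinearity. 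Collecting the four transformed terms reconstructs the CHSH operator built from the rotated observables $\tilde A_i:=U_A^\dagger A_i U_A$ and $\tilde B_j:=U_B^\dagger B_j U_B$, and the resulting trace against $\rho$ is exactly the right-hand side of the claim.

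I do not anticipate a substantive obstacle: the lemma is essentially bookkeeping built on cyclicity of the trace and the factorisation of tensor products under local conjugation. The only points that require a moment of care are (i) preserving the sign pattern of $\mathcal{B}_{\text{CHSH}}$—in particular the minus sign in front of $A_1\otimes B_1$—when redistributing the conjugation across the four terms, and (ii) noting that the rotated observables $\tilde A_i$ and $\tilde B_j$ remain Hermitian with the same $\pm 1$ spectrum as $A_i$ and $B_j$, so the right-hand side is a legitimate CHSH expectation value rather than merely a formal trace identity. Both points are immediate from unitarity of $U_A$ and $U_B$.
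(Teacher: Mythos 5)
Your proposal is correct and matches the paper's own argument: both rely on cyclicity of the trace applied to each correlator $\tr[(U_A\otimes U_B)\rho(U_A\otimes U_B)^\dagger(A_i\otimes B_j)]=\tr[\rho\,(U_A^\dagger A_iU_A\otimes U_B^\dagger B_jU_B)]$, followed by linearity over the four terms of the CHSH combination. Your additional remarks on sign bookkeeping and on the rotated observables remaining valid $\pm1$-valued observables are fine but not needed beyond what the paper states.
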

\begin{proof}
    This follows from cyclicity of the trace
    \begin{equation}
        \tr[(U_A\otimes U_B)\rho (U_A\otimes U_B)^\dagger(A_i \otimes B_j)] = \tr[\rho(U_A^\dagger A_iU_A\otimes U_B^{\dagger}B_jU_B)].
    \end{equation}
and linearity in the CHSH combination. 
\end{proof}
Next, the second lemma shows that conjugation by a Clifford unitary preserves the Pauli set:
\begin{lem}[Pauli set is invariant under Clifford conjugation]\label{Lem:app-conjugation} If $U$ is a single-qubit Clifford unitary and $P\in\mathcal{P}$, then $U^\dagger P U \in \mathcal{P}$. 
\end{lem}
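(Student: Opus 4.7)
The plan is to use the defining property of the Clifford group together with two elementary invariants preserved by unitary conjugation. By definition, the single-qubit Clifford group $\mathcal{C}_1$ is the normalizer of the Pauli group $\mathcal{P}_1 = \{\pm 1, \pm i\}\cdot\{\iden, X, Y, Z\}$: for every $U\in\mathcal{C}_1$ and every $Q\in\mathcal{P}_1$, we have $U^\dagger Q U\in\mathcal{P}_1$. Hence, for $P\in\mathcal{P}=\{\pm X,\pm Y,\pm Z\}\subset\mathcal{P}_1$, it is already known that $U^\dagger P U\in\mathcal{P}_1$; what remains is to exclude the unwanted elements of $\mathcal{P}_1\setminus\mathcal{P}$, namely the phases $\pm\iden,\pm i\iden$ and the imaginary multiples $\pm i X, \pm i Y, \pm i Z$.

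To rule these out I would invoke two invariants of the map $Q\mapsto U^\dagger Q U$. First, unitary conjugation preserves Hermiticity, so $U^\dagger P U$ is Hermitian; since $\pm i\iden$ and $\pm i X, \pm i Y, \pm i Z$ are anti-Hermitian, they are immediately excluded. Second, conjugation preserves the trace, so $\tr(U^\dagger P U)=\tr(P)=0$, which eliminates the remaining candidates $\pm\iden$. Everything else in $\mathcal{P}_1$ coincides with $\mathcal{P}$, proving the claim. (Equivalently, one can note that $U^\dagger P U$ squares to $\iden$ and is traceless and Hermitian, the three defining properties of a signed nontrivial Pauli.)

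An alternative I would mention only as a cross-check is the generator-based proof: $\mathcal{C}_1$ is generated by the Hadamard $H$ and the phase gate $S$, and one verifies directly that $H X H = Z$, $H Y H = -Y$, $H Z H = X$, $S^\dagger X S = Y$, $S^\dagger Y S = -X$, $S^\dagger Z S = Z$, each of which lies in $\mathcal{P}$. Closure of $\mathcal{P}$ under conjugation by products of generators then follows by induction.

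The only potential obstacle is a convention issue: depending on whether ``Clifford'' is defined as the normalizer of the Pauli group or instead as the unitary group generated by $\{H,S,\mathrm{CNOT}\}$, one half of the argument is built in by definition and the other must be verified. In either convention the invariants above close the gap in a few lines, so no real technical difficulty arises.
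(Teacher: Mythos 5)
Your proof is correct and takes essentially the same route as the paper, which likewise invokes the normalizer property of the Clifford group and notes that Hermiticity forces the residual phase to be $\pm 1$. Your version is simply more explicit, adding the tracelessness argument to exclude $\pm\iden$ and the generator-based cross-check, both of which the paper leaves implicit.
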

\begin{proof}
    This is the defining property of the Clifford group as the normaliser of the single-qubit Pauli Group: $U^\dagger \mathcal{P} U = \mathcal{P}$ up to phases, and for Hermitian Paulis the relevant phase reduces to $\pm 1$.
\end{proof}
We now state the main theorem:
\begin{thm}[No CHSH violation with Pauli measurements on a local-Clifford Bell pair]. Let $\ket{\Phi^+}:=2^{-\tfrac12}(\ket{00}+\ket{11)})$ and let $\rho$ be any two-qubit state of the form:
\begin{equation}\label{Eq:app-modified-bell}
    \rho = (U_A\otimes U_B)\ketbra{\Phi^+}(U_A\otimes U_B)^\dagger,
\end{equation}
where $U_A$ and $U_B$ are single-qubit Clifford unitaries. Assume Alice and Bob are restricted to Pauli observables $A_0, A_1, B_0, B_1 \in \mathcal{P}$. Then,
\begin{equation}
    |S_{\textrm{CHSH}}(\rho;A_0, A_1, B_0, B_1)|\leq 2.
\end{equation}
\end{thm}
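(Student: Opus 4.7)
The plan is to exploit the two preparatory lemmas to absorb the local Cliffords into the observables, reducing the theorem to the single algebraic statement that Pauli observables on the bare Bell state $\ket{\Phi^+}$ cannot violate CHSH. First I would apply Lemma~\ref{Lem:invariance-conjugation} with $U_A, U_B$ the Cliffords appearing in Eq.~\eqref{Eq:app-modified-bell}, which leaves $S_{\textrm{CHSH}}$ invariant at the cost of replacing each $A_i$ by $\tilde{A}_i := U_A^\dagger A_i U_A$ and each $B_j$ by $\tilde{B}_j := U_B^\dagger B_j U_B$. By Lemma~\ref{Lem:app-conjugation}, the $\tilde{A}_i$ and $\tilde{B}_j$ remain in $\mathcal{P}$, so it suffices to bound the CHSH value on $\ket{\Phi^+}$ over arbitrary signed Pauli observables.

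Next I would tabulate the relevant correlators. Writing each conjugated observable as $\tilde{A}_i = \alpha_i P_i$ and $\tilde{B}_j = \beta_j Q_j$ with $\alpha_i, \beta_j \in \{\pm 1\}$ and $P_i, Q_j \in \{X, Y, Z\}$, a direct computation on $\ket{\Phi^+}$ gives $\langle \Phi^+ | P\otimes Q | \Phi^+\rangle = \eta_P\, \delta_{P,Q}$, with $\eta_X = \eta_Z = +1$ and $\eta_Y = -1$. Hence every two-point correlator $\langle \tilde{A}_i \tilde{B}_j\rangle$ lies in $\{-1, 0, +1\}$ and is nonzero only when Alice's and Bob's chosen Pauli types coincide. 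This ``matching'' constraint is the combinatorial structure driving the bound.

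I would then finish with a short case analysis on how many of the four correlators survive. A one-line pigeonhole argument shows that exactly three of them cannot be nonzero simultaneously: if, for instance, $\langle \tilde{A}_0 \tilde{B}_0\rangle, \langle \tilde{A}_0 \tilde{B}_1\rangle, \langle \tilde{A}_1 \tilde{B}_0\rangle$ were all nonzero, matching would force $P_0 = Q_0 = Q_1$ and $P_1 = Q_0$, hence $P_1 = Q_1$, contradicting $\langle \tilde{A}_1 \tilde{B}_1\rangle = 0$. If at most two correlators are nonzero, the bound $|S_{\textrm{CHSH}}| \leq 2$ is immediate. If all four are nonzero, matching forces $P_0 = P_1 = Q_0 = Q_1 =: P$, and the CHSH expression collapses to $\eta_P\left[\alpha_0(\beta_0+\beta_1) + \alpha_1(\beta_0-\beta_1)\right]$, which equals $\pm 2$ because exactly one of $\beta_0 \pm \beta_1$ equals $\pm 2$ while the other vanishes. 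The only subtle step is the pigeonhole observation that rules out the three-nonzero case; without it one would have to handle a three-term subcase by hand. Once this is noted, the rest is a one-line algebraic check, and the bound is in fact shown to be tight.
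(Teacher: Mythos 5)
Your proposal is correct and follows essentially the same route as the paper: absorb the local Cliffords into the observables via the two lemmas, tabulate the Pauli--Pauli correlators on $\ket{\Phi^+}$ (nonzero only for matching axes, with values $\pm 1$), and finish with a short case analysis. The only cosmetic difference is that you organize the cases by the number of surviving correlators (using a pigeonhole step to exclude exactly three nonzero) whereas the paper splits on whether Bob's two observables share a Pauli axis; both yield the same bound by the same mechanism.
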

\begin{proof}
    Let $\rho$ be as in Eq~\eqref{Eq:app-modified-bell}. Applying Lemma~\ref{Lem:invariance-conjugation} gives
    \begin{equation}\label{Eq:app-reduction}
        S_{\text{CHSH}}(\rho;A_0, A_1, B_0, B_1) = S_{\text{CHSH}}\qty(\ketbra{\Phi^+}; A_0', A_1', B_0', B_1')
    \end{equation}
where $A_i'=U_A^\dagger A_i U_A$ and $B_j' = U_B^\dagger B_j U_B$. By Lemma~\ref{Lem:app-conjugation}, $A_0', A_1', B_0', B_1' \in \mathcal{P}$. Thus it suffices to prove that for any $A_0, A_1, B_0, B_1 \in \mathcal{P}$ one has 
\begin{equation}\label{Eq:app-suffice-proof}
    S_{\textrm{CHSH}}\qty(\ketbra{\Phi^+};A_0, A_1, B_0, B_1) \leq 2.
\end{equation}
We prove this statement with two steps:
\begin{enumerate}
    \item \emph{Step 1: Pauli-Pauli correlator on $\ket{\Phi^+}$.} First, recall the following identity
\end{enumerate}
\begin{equation}\label{Eq:app-identity-Bell}
    \langle \Phi^+|A\otimes B|\Phi^+\rangle = \frac{1}{2}\tr(AB^\top),
\end{equation}
where the transpose is taken in the computational basis. Applying Eq.~\eqref{Eq:app-identity-Bell} to $A\in\{X,Y,Z\}$ and $B\in\{X,Y,Z\}$ and using $X^\top = X, Z^\top = Z, Y^\top = -Y$, and $\tr(PQ) = 2\delta_{P,Q}$ for $P, Q \in \{X, Y, Z\}$ gives
\begin{align}\label{Eq:app-Pauli-correlation-1}
    \langle X \otimes X \rangle_{\Phi^+} &= + 1, \nonumber\\ \langle Y \otimes Y \rangle_{\Phi^+} &= -1, \nonumber\\ \langle Z \otimes Z \rangle_{\Phi^+} &= + 1,
\end{align}
and
\begin{equation}\label{Eq:app-Pauli-correlation-2}
    \langle P\otimes Q\rangle_{\Phi^+} =0 \:\: \text{whenever} \:\: P, Q \in \{X,Y,Z\} \:\: \text{and}\:\: P\neq Q.
\end{equation}
\begin{enumerate}[start=2]
    \item \emph{Step 2: CHSH bound under Pauli choices}. Let $A_0, A_1, B_0, B_1 \in \mathcal{P}$. Define the four correlators $s_{ij}:=\langle A_i \otimes B_j \rangle$. By Eqs.~\eqref{Eq:app-Pauli-correlation-1}-\eqref{Eq:app-Pauli-correlation-2}, each $s_{ij} \in \{0,\pm1\}$ and $s_{ij}\neq0$ can occur only if $A_i$ and $B_j$ are the same Pauli axis (up to a sign). We consider two cases:
\begin{enumerate}
    \item Case 1: $B_1 = \pm B_0$. If $B_1 = B_0$, then
    \begin{equation}
        S_{\text{CHSH}} = s_{00}+s_{10}+s_{01} - s_{11} = 2s_{00},
    \end{equation}
    since $s_{01}=s_{00}$ and $s_{11} = s_{10}$. Hence, $|S_{\text{CHSH}}| = 2|s_{00}| \leq 2$. If $B_1 = -B_0$, then $|S_{\text{CHSH}}| = 2s_{10}$ and again $|S_{\text{CHSH}}| \leq 2.$
\item Case 2: $B_0$ and $B_1$ are different Pauli axes. Then for each fixed $i\in\{0,1\}$, $A_i$ can match at most one of the two distinct axes $\{B_0, B_1\}$ (up to a sign), so at most one of $s_{i0}$ and $s_{i1}$ is nonzero. Consequently,
\begin{align}\label{Eq:app-correlations-3}
    |s_{00}+s_{01}| &\leq 1 ,\nonumber\\|s_{00}-s_{01}| &\leq 1.
\end{align}
Indeed, if both terms in a pair are zero the bound is trivial. If exactly one is zero, then the pair equals $\pm 1$. They cannot both be nonzero because that would require $A_i$ to coincide with both distinct Pauli axes. Using the triangle inequality and Eq.~\eqref{Eq:app-correlations-3},
\begin{equation}
    |S_{\text{CHSH}}| = |s_{i0}+s_{i1}|\le 1 \quad \text{and} \quad |s_{i0}-s_{i1}|\le 1 \:\:\text{for} \:\: i=0,1.
\end{equation}
\end{enumerate}
\end{enumerate}
Thus $|S_{\text{CHSH}}|\leq 2$ holds in both cases, proving that Eq.~\eqref{Eq:app-suffice-proof}. Combining with the reduction given by Eq.~\eqref{Eq:app-reduction} completes the proof of the theorem.
\end{proof}
Observe that the reduction from $\rho$ in Eq~\eqref{Eq:app-reduction} to $\ket{\Phi^+}$ uses the fact that Clifford conjugation preserves the Pauli set: $U^\dagger P U \in \{X,Y,Z\}$ whenever $U$ is a Clifford and $P\in \mathcal{P}$. This is false for general single-qubit unitaries, hence the assumption in Eq.~\eqref{Eq:app-reduction} is essential for the theorem as stated. Finally, in GKP-based implementations, logical $X$, $Y$, and $Z$ measurements are commonly realized by homodyne measurement of a quadrature (or a rotated quadrature), followed by modular binning into two outcomes aligned with the GKP lattice. When this procedure implements dichotomic observables that are well-approximated by the logical Pauli set, the above theorem explains the absence of CHSH violations in the bipartite case: CHSH violation requires measurement directions outside the Pauli axes (e.g.\ $\propto X\pm Z$), which are not available under a Pauli restriction.

\section{Multipartite Bell inequalities}\label{App:bell-inequalities}

In this Appendix we define the multipartite Bell functionals used in the main text for the GHZ and $W$ families. Throughout, each party $j\in\{1,\dots,N\}$ receives an input $x_j\in\{0,1\}$ selecting one of two binned-homodyne POVMs $\{M_{0}^{(x_j)},M_{1}^{(x_j)}\}$, and returns an output bit $o_j\in\{0,1\}$, yielding the behaviour $p(\v o|\v x)$. Given a two-outcome POVM $\{M_{0}^{(x)},M_{1}^{(x)}\}$ we associate the corresponding dichotomic observable
\begin{equation}\label{Eq:dichotomic-observable}
    A^{(x)} := \sum_{o\in\{0,1\}} (-1)^o\, M_{o}^{(x)} = M_{0}^{(x)}-M_{1}^{(x)} ,
\end{equation}
whose outcome is the $\pm1$ variable $a:=(-1)^o$. For a setting string $\v x\in\{0,1\}^N$, the full correlator is
\begin{equation}\label{Eq:full-correlator}
    E(\v x):=\Bigl\langle \prod_{j=1}^N a_j\Bigr\rangle =\sum_{\v o\in\{0,1\}^N} (-1)^{\sum_{j} o_j}\, p(\v o|\v x) =\tr\left[\rho\,\bigotimes_{j=1}^N A_j^{(x_j)}\right],
\end{equation}
where $A_j^{(x_j)}$ denotes the observable~\eqref{Eq:dichotomic-observable} acting on mode $j$. In our implementation, the settings correspond to homodyne angles that (in the ideal GKP limit) realise logical Pauli measurements, e.g., $(Y,X)$ or $(Z,X)$ as specified below.

\subsection{MABK inequality for GHZ-type resources}\label{App:MABK}

To witness GHZ-type full-correlation nonlocality we use the Mermin--Ardehali--Belinskii--Klyshko (MABK) family. Here each party chooses between two dichotomic observables $A_j := A_{j}^{(0)}, A'_j := A_{j}^{(1)}$, which in the main text correspond to the effective logical settings $(Y,X)$, i.e., $x=0$ is the binned-homodyne measurement at $\theta_Y$ and $x=1$ at $\theta_X$. Define the Bell operators $B_N$ and $B'_N$ recursively by $B_1:=A_1$ and $B_1':=A_1'$, and,
for $N\ge2$,
\begin{subequations}\label{Eq:MABK-recursion}
\begin{align}
B_N &:= \frac{1}{2}\Bigl[B_{N-1}\otimes\left(A_N + A_N'\right)
+ B_{N-1}'\otimes\left(A_N - A_N'\right)\Bigr],\\[2pt]
B_N'&:= \frac{1}{2}\Bigl[B_{N-1}'\otimes\left(A_N + A_N'\right)
- B_{N-1}\otimes\left(A_N - A_N'\right)\Bigr].
\end{align}
\end{subequations}
The corresponding (dimensionless) Bell parameter is $S_N^{\ms M} := 2^{\lfloor N/2\rfloor}\,\max\left\{ \bigl|\langle B_N\rangle\bigr|,\ \bigl|\langle B_N'\rangle\bigr| \right\}$, with $\langle B_N\rangle=\tr(\rho\,B_N)$ (and similarly for $B_N'$).
Any local-hidden-variable (LHV) model obeys $S_N^{\ms M}\le 2^{\lfloor N/2\rfloor}$, so that $S_N^{\ms M}>2^{\lfloor N/2\rfloor}$ certifies multipartite Bell nonlocality.

\subsection{Cabello-type inequality for $W$-type resources}\label{App:Cabello}

To probe nonlocality associated with $W$-type entanglement we use a Cabello/Hardy-type probability inequality \cite{Cabello2002,Heaney2011}, which only involves joint probabilities for a restricted set of measurement configurations. Here each party chooses between two binned-homodyne measurements corresponding to the effective logical settings $(Z,X)$, i.e.\ $x=0$ is the setting at $\theta_Z$ and $x=1$ at $\theta_X$.

Let $\v 0:=(0,\dots,0)$ and $\v 1:=(1,\dots,1)$, and let $\v e_j$ denote the $N$-bit string with a single $1$ at site $j$ and zeros elsewhere. For each unordered pair $(i,j)$ with $i<j$, define the setting string $\v x^{(i,j)}$ by $(x^{(i,j)})_i=(x^{(i,j)})_j=1$ and $(x^{(i,j)})_k=0$ for $k\neq i,j$ (two $X$'s and the rest $Z$'s). We then define
\begin{equation}\label{Eq:Cabello-functional}
\begin{aligned}
S_N^{\ms C} := p(\v 0\mid \v 0) + \sum_{j=1}^N p(\v e_j\mid \v 0)
 - \sum_{1\le i<j\le N}\Bigl[p(\v e_i\mid \v x^{(i,j)})+p(\v e_j\mid \v x^{(i,j)})\Bigr]-\,p(\v 0\mid \v 1) - p(\v 1\mid \v 1).
\end{aligned}
\end{equation}
Any LHV model satisfies the bound $S_N^{\ms C}\le 0,$ so that $S_N^{\ms C}>0$ certifies multipartite Bell nonlocality. As emphasized in the main text, \eqref{Eq:Cabello-functional} depends only on probabilities from the three classes of setting strings: all-$Z$ ($\v x=\v 0$), all-$X$ ($\v x=\v 1$), and two-$X$ settings $\v x^{(i,j)}$, totaling $2+\binom{N}{2}$ measurement configurations.

\subsection{Distance to the polytope as a linear program}\label{App:distance-polytope}
Beyond specific inequalities one can also verify nonlocality via the distance to the local polytope. Here we establish the distance from a distribution $\v{p}$ to the local polytope $\mathcal{L}$ of classical correlations as a linear program.  
For an $N$-partite Bell scenario, each party having  $I$ inputs and $O$ outputs, any classical distribution $q(\v{o}|\v{b})$ must be decomposable as
\begin{equation}
    q(\v{o}|\v{b}) = \sum_{\lambda}q_\lambda\prod_{j=1}^N q(o_j|b_j), \quad \text{for } \v{o}\in\{1,\dots,O\}^N, \: \v{b}\in\{1,\dots,I\}^N ,\label{eq: LHV_model_Nparties}
\end{equation}
where $q(o_j|b_j)$ corresponds to the local distribution observed by party $j$ and $q(\lambda)$ is the probability distribution of the common hidden variable $\lambda$. By denoting $\v{p}\in \mathbb{R}^{(OI)^N}$ the real vector with entries $p_{i=(o_1,\dots,o_N,b_1,\dots, b_N)} = p(\v{o}|\v{b})$,  the set of all classical distributions $\v{q}$ given by Eq.~(\ref{eq: LHV_model_Nparties}) defines the classical polytope $\mathcal{L}_N$. Linear programming allows to quantify the distance $\mathcal{D}(\v{p})$ of any distribution $\v{p}$ from the local polytope.  The distance can be defined as the following minimization 
\begin{equation}
\begin{aligned}
    \mathcal{D}(\v{p}) = \min &\|\v p-\v q\|_1 \\
    \text{s.t.} \quad & \v{q} =\sum_{\lambda = 0}^{n-1}q_\lambda\v{q}^\lambda, \quad
    q_i^\lambda = \prod_{j=1}^N\delta_{f_j(b_j,\lambda),o_j}, \quad
    \sum_{\lambda=0}^{n-1} q_\lambda = 1, \quad q_\lambda\geq0.
\end{aligned}
\end{equation}
where we have denoted by $f_j$ the finite deterministic response functions of party $j$. Note that, since the number of inputs and outputs is finite, there is also a finite number  $n$ of deterministic strategies. Also note that the function $\mathcal{D}(\v{p})$ can be stated as a linear program because all the constraints are linear and the objective function, the 1-norm, can be linearized through slack variables \cite{skrzypczyk2023semidefinite}.
Any classical distribution must have distance zero, while any positive distance indicates that the target distribution is  unfactorizable as a LHV model. 

Figure~\ref{F-LP} compares the geometric distance to the local polytope for distributions generated by GHZ and W states in 2 scenarios: the standard 2-setting case ($\theta_X$,$\theta_Y$ or $\theta_X$,$\theta_Z$) and an extended 3-setting case ($\theta_X$,$\theta_Y$,$\theta_Z$). Consistent with the results in the main text, the $W$-state distributions (dashed lines) exhibit nonlocality at lower squeezing thresholds than the $GHZ$ distributions (solid lines). Crucially, increasing the number of settings from two to three yields a dramatic increase in the distance $\mathcal{D}(\v{p})$ of the $W$-state distributions. In the lossless limit ($\eta=1$), the distributions saturate the algebraic maximum of $\mathcal{D}(\v{p}) = 2$, demonstrating maximal distinguishability (zero overlap) with the local set. The noiseless correlations reside on the boundary of the no-signaling polytope.
\begin{figure*}
    \centering
    \includegraphics{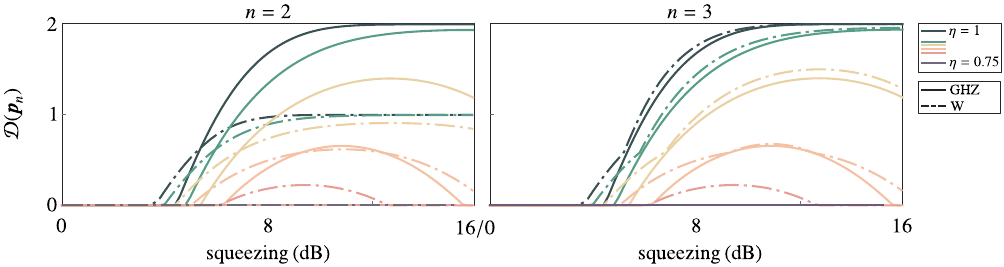}
    \caption{\emph{Nonlocality beyond Bell.}  Distance to the local polytope, $\mathcal{D}(\v p_n)$, for the binned-homodyne behaviours generated by finitely squeezed GKP-encoded GHZ (solid lines) and $W$ (dashed lines) resources, as a function of the squeezing (dB). Left: $n=2$ inputs, corresponding to the effective Pauli settings $(X,Y)$ for GHZ and $(X,Z)$ for $W$. Right: $n=3$ inputs, corresponding to $(X,Y,Z)$. Colours indicate the channel transmissivity $\eta$ (from $\eta=1$ to $\eta=0.75$). Positive values of $\mathcal{D}$ certify Bell nonlocality.}
    \label{F-LP}
\end{figure*}
\section{Finite-energy GKP states\label{App-finite-energy-GKP-state}}

In this appendix, we derive the finite-energy wave functions and the related quantities which follows from it for a finite-energy GKP state. 

\subsection{Energy-filter matrix element}

The central piece for describing the finite-energy GKP wavefunctions is to compute how $e^{-\epsilon \hat{n}}$ acts on a position eigenstate $\ket{x_0}_x$. To do so, we begin by computing the matrix element of the energy filter $\langle x | e^{-\epsilon \hat{n}} | x_0 \rangle$ for $\epsilon>0$. Working with dimensionless quadratures, the number operator can be written as $\hat{n} = \hat{H}-\tfrac12$, which implies $e^{-\epsilon\hat{n}} = e^{-\tfrac{\epsilon}{2}}e^{-\epsilon \hat{H}}$. Consequently, one finds~\cite{Sakurai2017}: $\langle x | e^{-\epsilon \hat{n}} | x_0 \rangle = e^{-\tfrac{\epsilon}{2}} K(x, x_0; \epsilon)$, where $K(x, x_0; \epsilon) \equiv \langle x | e^{-\epsilon \hat{H}} | x_0 \rangle$ denotes the Euclidean (imaginary-time) propagator of the harmonic oscillator. The propagator can be obtained via an eigenfunction expansion. The normalised energy eigenfunctions are
\begin{equation}
    \psi_n(x) = \frac{1}{\pi^{\tfrac14}\sqrt{2^n n!}} e^{-\frac{x^2}{2}} H_n(x)\quad \text{with} \quad n = 0, 1, 2, \dots,
\end{equation}
where $H_n$ denote the Hermite polynomials. The corresponding eigenvalues are $E_n = n+\tfrac12$. Since the functions $\psi_n$ are real, the propagator can be written in the following form:
\begin{equation}\label{Eq:app-kernel-1-sum}
    K(x, x_0; \epsilon) = \sum_{n=0}^{\infty} e^{-\epsilon E_n} \psi_n(x) \psi_n(x_0) = \frac{e^{-\tfrac{\epsilon}{2}}e^{-\frac{x^2+x_0^2}{2}}}{\sqrt{\pi}}\sum_{n=0}^{\infty} \frac{(e^{-\epsilon})^n }{2^n n!} H_n(x)H_n(x_0).
\end{equation}
Using the Mehler formula~\cite{rainville1960special}, the sum appearing in Eq.~\eqref{Eq:app-kernel-1-sum} admits a closed-form expression:
\begin{equation}\label{Eq:app-mehler-formula}
    \sum_{n=0}^{\infty} \frac{H_n(x) H_n(x_0)}{2^n n!} \ell^n = 
\frac{1}{\sqrt{1 - \ell^2}} \exp \qty( \frac{2x x_0 \ell - (x^2 + x_0^2) \ell^2}{1 - \ell^2}),
\end{equation}
with $\ell=e^{-\epsilon}$. Then, substituting Eq.~\eqref{Eq:app-mehler-formula} into Eq.~\eqref{Eq:app-kernel-1-sum}, the kernel takes the form
\begin{equation}
    K(x,x_0;\epsilon) = \frac{1}{\sqrt{\pi}}\frac{\exp(-\tfrac{\epsilon}{2})}{\sqrt{1 - \ell^2}}\exp(-\frac{x^2+x_0^2}{2}) \exp \qty( \frac{2x x_0 \ell - (x^2 + x_0^2) \ell^2}{1 - \ell^2}) =  \frac{1}{\sqrt{\pi}}\frac{\exp(-\tfrac{\epsilon}{2})}{\sqrt{1 - \ell^2}}\exp\qty[ \frac{2x x_0 \ell - (x^2 + x_0^2) (1+\ell^2)}{1 - \ell^2}]. 
\end{equation}
Introducing the following hyperbolic identities,
\begin{align}
     1-\ell^2 &= 2e^{-\epsilon}\sinh \epsilon, \\
     \frac{1+\ell^2}{1-\ell^2} &=\coth \epsilon, \\
     \frac{2\ell}{1-\ell^2} &= \frac{1}{\sinh\epsilon},
\end{align}
the propagator can be simplified as follows:
\begin{equation}
    K(x,x_0;\epsilon)  = \frac{1}{\sqrt{2\pi \sinh\epsilon}}\exp\qty[-\frac{(x^2 + x_0^2)\cosh\epsilon - 2x x_0}{2\sinh\epsilon}].
\end{equation}
Finally, the matrix element of the energy filter reads
\begin{equation}\label{Eq:app-matrix-element}
    \langle x|e^{-\epsilon \hat{n}}|x_0\rangle  = \frac{e^{\tfrac{\epsilon}{2}}}{\sqrt{2\pi \sinh\epsilon}}\exp\qty[-\frac{(x^2 + x_0^2)\cosh\epsilon - 2x x_0}{2\sinh\epsilon}].
\end{equation}
As a sanity check, note that in the limit $\epsilon\to0^+$, Eq.~\eqref{Eq:app-matrix-element} tends to $\delta(x-x_0)$. In the opposite limit, $\epsilon\to\infty$, one finds $K(x,x_0;\epsilon)\to e^{-\epsilon/2}\psi_0(x)\psi_0(x)$ and $\langle x|e^{-\epsilon \hat{n}}|x_0\rangle \to \psi_0(x)\psi_0(x)$, corresponding to a projection onto the vacuum.

To reveal the structure relevant for GKP states, we complete the square in the exponent of Eq.~\eqref{Eq:app-matrix-element}:
\begin{align}\label{Eq:app-complete-square}
    -\frac{(x^2 + x_0^2)\cosh\epsilon - 2x x_0}{2\sinh\epsilon} &= -\frac{\cosh\epsilon}{2\sinh \epsilon}(x-x_0\sech\epsilon)^2-\frac{\sinh\epsilon}{2\cosh\epsilon}x_0^2 \nonumber\\
    &=\frac{1}{4\sigma^2}(x-x_0 \sech\epsilon)^2 -\frac{x_0^2}{4\Sigma^2},
\end{align}
where we have defined
\begin{align}
    \sigma^2&:=\frac{1}{2}\tanh\epsilon, \\
    \Sigma^2&:=\frac{1}{2}\coth \epsilon = \frac{1}{4\sigma^2}.
\end{align}
The parameters $\sigma^2$ and $\Sigma^2$ do not directly correspond to the variances of the Gaussians. Instead, the peak variance is $2\sigma^2$, while the envelope variance is $2\Sigma^2$. The factor $\sech\epsilon$ describes a contraction of the underlying lattice.

The ideal square-lattice GKP logical states are Dirac combs $\ket{\mu_L^{\text{ideal}}} \propto \sum_{s\in \mathbbm{Z}} \ket{x=(2s+\mu)\sqrt{\pi}}$. Applying the energy filter yields finite-energy states $\ket{\mu_L^{\epsilon}} \propto \ket{\mu_L^{\text{ideal}}}$. Their position-space wavefunctions are obtained via the integral transform with kernel given by Eq.~\eqref{Eq:app-matrix-element}, which, for Dirac combs, collapses to a discrete sum. Using Eq.~\eqref{Eq:app-complete-square}, the resulting finite-energy codewords are normalisable and can be specified by their position-space wavefunctions $\psi_{\mu_L^\epsilon}(x):={}_x\langle{x\,|\,\mu_L^\epsilon\rangle}$, with
\begin{subequations}\label{Eq:GK-wave-f}
\begin{align}
\psi_{0_L^\epsilon}(x)
&= \frac{1}{\sqrt{\mathcal{N}_0}} \sum_{s\in\mathbb{Z}}
\exp\qty[{-\frac{(x-2s\sqrt{\pi}\sech\epsilon)^2}{4\sigma^2}}-\frac{(2s\sqrt{\pi})^2}{4\Sigma^2}],\\[4pt]
\psi_{1_L^\epsilon}(x)
&= \frac{1}{\sqrt{\mathcal{N}_1}} \sum_{s\in\mathbb{Z}}
\exp\qty[-\frac{[x-(2s+1)\sqrt{\pi}\sech\epsilon]^2}{4\sigma^2} -\frac{[(2s+1)\sqrt{\pi}]^2}{4\Sigma^2}].
\end{align}
\end{subequations}

\subsection{Wigner functions of the GKP Pauli operators}
We represent oscillator states in phase space via the Wigner transform
\begin{equation}
W_A(q,p) :=\frac{1}{2\pi}\int_{-\infty}^{\infty}dy e^{ipy} \Big\langle q-\frac{y}{2}\Big|A\Big|q+\frac{y}{2}\Big\rangle,
\end{equation}
with $(q,p)$ the canonical quadratures. For the energy-filter model $e^{-\epsilon\hat n}$, the Wigner functions of the operators
$\sigma_k^\epsilon$ admit a convenient closed form as a Gaussian mixture on a contracted square lattice:
\begin{equation}\label{eq:GKP_Pauli_Wigner}
W_{\sigma_k^\epsilon}(q,p) = (-1)^{\delta_{k,2}} \sum_{\v{m}\in\mathcal M_k} c_{\v{m}}^\epsilon\, s_k(\v{m})\, G_{\v{\mu}_{\v{m}}^\epsilon,\v\Sigma_W^\epsilon}(q,p), \qquad k=0,1,2,3.
\end{equation}
Here $\v x:=(q,p)^T$, and
\begin{equation}\label{Eq:app-Gaussian-kernel}
G_{\v{\mu},\v\Sigma_W}(\v x) :=\frac{1}{2\pi\sqrt{\det\v\Sigma_W}} \exp\left[-\frac{1}{2}(\v x-\v{\mu})^T \v\Sigma_W^{-1}(\v x-\v{\mu})\right]
\end{equation}
is a centred two-dimensional Gaussian with mean $\v{\mu}$ and covariance matrix
$\v\Sigma_W$.
For the finite-energy GKP states of Eq.~\eqref{Eq:GK-wave-f}, the covariance is isotropic and equals
\begin{equation}\label{Eq:app-covariance-isotropic}
\v\Sigma_W^\epsilon=\sigma^2\,\iden=\frac{1}{2}\tanh(\epsilon)\,\iden,
\end{equation}
while the phase-space means are located on a contracted square lattice,
\begin{equation}\label{Eq:contracted-square-lattice}
\v{\mu}_{\v{m}}^\epsilon
=\sech(\epsilon)\frac{\sqrt{\pi}}{2}\v m \:\: \text{and} \:\: 
\v{m}=\begin{bmatrix}m_1\\m_2\end{bmatrix}\in\mathbb Z^2.
\end{equation}
The sets $\mathcal M_k$ select one of four lattice cosets:
\begin{subequations}\label{Eq:app-MK-sets}
\begin{align}
\mathcal M_0 &= \{\v m\in\mathbb Z^2 : m_1\equiv 0 \ (\mathrm{mod}\ 2),\; m_2\equiv 0 \ (\mathrm{mod}\ 2)\},\\
\mathcal M_1 &= \{\v m\in\mathbb Z^2 : m_1\equiv 1 \ (\mathrm{mod}\ 2),\; m_2\equiv 0 \ (\mathrm{mod}\ 2)\},\\
\mathcal M_2 &= \{\v m\in\mathbb Z^2 : m_1\equiv 1 \ (\mathrm{mod}\ 2),\; m_2\equiv 1 \ (\mathrm{mod}\ 2)\},\\
\mathcal M_3 &= \{\v m\in\mathbb Z^2 : m_1\equiv 0 \ (\mathrm{mod}\ 2),\; m_2\equiv 1 \ (\mathrm{mod}\ 2)\}.
\end{align}
\end{subequations}
and the sign pattern is
\begin{subequations}\label{Eq:app-sign-patterns}
\begin{align}
s_0(\v m) &= 1,\\
s_1(\v m) &= (-1)^{m_2/2},\\
s_2(\v m) &= (-1)^{(m_1+m_2)/2},\\
s_3(\v m) &= (-1)^{m_1/2}.
\end{align}
\end{subequations}
which is responsible for the alternating positive and negative peaks of the Wigner function.
Finally, the envelope weights are
\begin{equation}\label{Eq:app-cm-weights}
c_{\v{m}}^\epsilon
=\exp\left[-\frac{\pi}{4}\tanh(\epsilon)\big(m_1^2+m_2^2\big)\right],
\end{equation}
guaranteeing convergence of the lattice sum.

\subsection{Loss and thermal-noise model}
\label{App:loss-thermal}

In the main text we model attenuation and Gaussian noise by coupling each signal mode to an environment mode on a beam splitter of transmissivity $\eta\in[0,1]$. The environment is taken to be either vacuum (pure loss) or a thermal state (thermal loss). 
The thermal Wigner function is an isotropic Gaussian, $W_{\tau_{n_{\mathrm{th}}}}(q,p)=G_{\v 0,\v\Sigma_{\mathrm{th}}}(q,p)$, with $\v\Sigma_{\mathrm{th}}=\left(n_{\mathrm{th}}+\tfrac12\right)\iden$, where $n_{\mathrm{th}}$ is the thermal mean photon number $n_{\mathrm{th}}$. Consequently, the vacuum corresponds to $n_{\mathrm{th}}=0$ and $\v\Sigma_{\mathrm{th}}=\tfrac12\iden$. A key simplification for our purposes is that the the noisy-loss channel is  a Gaussian channel~\cite{Weedbrook2012}, hence it maps each Gaussian component of a Wigner function to another Gaussian. More precisely, for the Gaussian kernel of Eq.~\eqref{Eq:app-Gaussian-kernel} one has the transformation
\begin{equation}\label{Eq:App-gaussian-kernel-channel}
G_{\v\mu,\v\Sigma}\ \longmapsto\
G_{\sqrt{\eta}\,\v\mu,\ \eta\,\v\Sigma+(1-\eta)\v\Sigma_{\mathrm{th}} }.
\end{equation}
Equation~\eqref{Eq:App-gaussian-kernel-channel} can be understood directly at the level of first and second moments: in the Heisenberg picture the quadrature vector $\hat{\v x}=(\hat q,\hat p)^T$ transforms as $\hat{\v x}\mapsto \sqrt{\eta}\,\hat{\v x}+\sqrt{1-\eta}\,\hat{\v x}_E$, so that the mean rescales by $\sqrt{\eta}$ and the covariance picks up an additive isotropic term weighted by $(1-\eta)$. In particular, the excess-noise contribution is proportional to $(1-\eta)n_{\mathrm{th}}$, as expected for noise injected through the unused port of the beam splitter. Applying \eqref{Eq:App-gaussian-kernel-channel} term-wise to the Gaussian-mixture representation \eqref{eq:GKP_Pauli_Wigner} shows that the mixture structure of the finitely-squeezed GKP Pauli operators is preserved under loss and thermal noise. Defining
\begin{equation}\label{Eq:App-sigma-tilde}
\tilde{\v\Sigma}_W^{\,\epsilon}(\eta,n_{\mathrm{th}})
:=
\eta\,\v\Sigma_W^\epsilon+(1-\eta)\v\Sigma_{\mathrm{th}}
=
\tilde{\sigma}^2(\eta,n_{\mathrm{th}},\epsilon)\,\iden,
\:\:\text{with}\:\:
\tilde{\sigma}^2(\eta,n_{\mathrm{th}},\epsilon)
=
\eta\,\sigma^2+(1-\eta)\left(n_{\mathrm{th}}+\tfrac12\right),
\end{equation}
and the rescaled lattice means $\tilde{\v\mu}_{\v m}^{\,\epsilon}(\eta) := \sqrt{\eta}\,\v\mu_{\v m}^\epsilon = \sqrt{\eta}\,\sech(\epsilon)\frac{\sqrt{\pi}}{2}\v m$, we obtain, for $k=0,1,2,3$, the following Wigner function
\begin{equation}\label{Eq:App-GKP-Pauli-Wigner-loss}
W_{{\eta,n_{\mathrm{th}}}(\sigma_k^\epsilon)}(q,p) = (-1)^{\delta_{k,2}} \sum_{\v m\in\mathcal M_k} c_{\v m}^\epsilon\, s_k(\v m)\, G_{\tilde{\v\mu}_{\v m}^{\,\epsilon}(\eta),\,\tilde{\v\Sigma}_W^{\,\epsilon}(\eta,n_{\mathrm{th}})}(q,p).
\end{equation}
For pure loss (vacuum environment) one simply sets $n_{\mathrm{th}}=0$, giving $\tilde{\sigma}^2(\eta,0,\epsilon)=\eta\sigma^2+\tfrac{(1-\eta)}{2}$.

\section{Single-mode overlap}

Let $M_o^{(b)}$ be the single-mode POVM element corresponding to setting $b$ and binned outcome $o$ and let the $N$-mode measurement operator for settings $\v b = (b_1, ..., b_N)$ and outcomes $\v o = (o_1, ..., o_N)$ be the product POVM $M_{\v o}^{(\v b)}:= M_{o_1}^{(b_1)}\otimes \cdots \otimes M_{o_N}^{(b_N)}$. For any physical $N$-mode state $\rho$, Born's rule gives $p(\v o |\v b) = \tr(\rho M_{\v o}^{\v b})$. For any physical $N$-mode state $\rho$ that is an encoded qubit state (i.e., $\rho = V_{\epsilon}^{\otimes N} \rho_{\text{qubit}} V_{\epsilon}^{\dagger\otimes N}$ for some $N$-qubit state $\rho_{\text{qubit}}$), it admits a decomposition as linear combination of tensor-product operators built from the single-mode finite energy GKP Paulis $\sigma_k^\epsilon$: $\rho = \frac{1}{\mathcal{N}}\sum_{k\in\{0,1,2,3\}^N} c_k \sigma^{\epsilon}_{k_1}\otimes \cdots \otimes \sigma^{\epsilon}_{k_N}$, where $c_k \in \mathbbm{R}$ and $\mathcal{N}$ normalises the state. Hence, one can factorises the Born's rule as
\begin{align}
    p(\v o |\v b)&= \frac{1}{\mathcal{N}}\sum_k c_k \tr[(\sigma^{\epsilon}_{k_1}\otimes \cdots \otimes \sigma^{\epsilon}_{k_N})(M_{o_1}^{(b_1)}\otimes \cdots \otimes M_{o_N}^{(b_N)})] = \frac{1}{\mathcal{N}}\sum_k c_k \prod_{j=1}^N \tr(M_{o_j}^{(b_j)} \sigma_{k_j}^{\epsilon}) \nonumber\\ &=\frac{1}{\mathcal{N}}\sum_k c_k \prod_{j=1}^N t_{k_j,o_j}^{(b_j)}.
\end{align}
where $t_{k,o}^{(b)}:=\tr(M_{o_j}^{(b_j)} \sigma_{k_j}^{\epsilon})$ is the single-mode overlap.

Let $\hat{q}_\theta:=\hat q \cos\theta + \hat p \sin \theta$ denote the rotated quadrature, with generalised eigenstate $\ket{q_\theta}$. Since a setting $b$ fixes an angle $\theta_b$, we write $M^{(b)}_o := M^{(\theta_b)}_o = \int_{R_o} \textrm{d}q_\theta \ketbra{q_\theta}$, where the binning sets are the periodic union of intervals
\begin{equation}
    R_o= \bigcup_{s\in\mathbb Z} \qty[ (2s+o)\sqrt{\pi}-\frac{\sqrt{\pi}}{2}, (2s+o)\sqrt{\pi}+\frac{\sqrt{\pi}}{2}].
\end{equation}
Now, observe that the problem of computing the single-mode overlap reduces to computing a matrix element and integrating over bins. One can directly see this by linearity of the trace and the definition of $M_o^{(\theta)}$, i.e.,
\begin{equation}
    t^{(\theta)}_{k,o}=\tr(\int_{R_o} \textrm{d}q_\theta \ketbra{q_\theta}\sigma_k^\epsilon) = \int_{R_o}\textrm{d}q_\theta \langle q_\theta|\sigma_k^\epsilon|q_\theta\rangle.
\end{equation}
The next Lemma expresses the diagonal element $\langle q_\theta|A|q_\theta\rangle$ as a marginal of the Wigner function
\begin{lem}[Wigner marginal at $\theta=0$]\label{Lem:app-wigner-marginal}
Let $A$ be a trace-class operator on a single bosonic mode, and define its Wigner function by
\begin{equation}\label{Eq:app-Wigner-def}
    W_A(q,p) :=\frac{1}{2\pi}\int_{-\infty}^{\infty}\mathrm{d}y\, e^{ipy}
    \Big\langle q-\frac{y}{2}\Big|A\Big|q+\frac{y}{2}\Big\rangle.
\end{equation}
Then for all $q\in\mathbb{R}$,
\begin{equation}\label{Eq:wigner-marginal-zero}
    \int_{-\infty}^{\infty}\mathrm{d}p\, W_A(q,p) = \langle q|A|q\rangle.
\end{equation}
\end{lem}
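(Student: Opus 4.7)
The plan is to establish Eq.~\eqref{Eq:wigner-marginal-zero} directly from the definition~\eqref{Eq:app-Wigner-def} by performing the $p$-integration first and using the Fourier representation of the Dirac delta. Concretely, I would start from
\begin{equation}
\int_{-\infty}^{\infty}\mathrm{d}p\, W_A(q,p)
= \frac{1}{2\pi}\int_{-\infty}^{\infty}\mathrm{d}p\int_{-\infty}^{\infty}\mathrm{d}y\,
e^{ipy}\Bigl\langle q-\tfrac{y}{2}\Bigr|A\Bigl|q+\tfrac{y}{2}\Bigr\rangle,
\end{equation}
then swap the order of integration, recognise that $\frac{1}{2\pi}\int_{-\infty}^{\infty}\mathrm{d}p\, e^{ipy}=\delta(y)$, and finally collapse the $y$-integral against $\delta(y)$ to obtain $\langle q|A|q\rangle$. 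This is the entire logical content of the lemma; the rest is regularity bookkeeping.

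The only delicate point is the legitimacy of the integration swap and the pointwise use of the Fourier representation of $\delta(y)$, since $e^{ipy}$ times the off-diagonal kernel is not in general jointly $L^1$ in $(p,y)$. The cleanest way to handle this is to interpret the identity in the sense of tempered distributions: for $A$ trace-class the kernel $K(x,x'):=\langle x|A|x'\rangle$ belongs to $L^2(\mathbb{R}^2)$, and after the change of variables $(x,x')=(q-y/2,q+y/2)$ the function $y\mapsto \langle q-y/2|A|q+y/2\rangle$ is a well-defined Schwartz-class object against which the distributional $p$-integral of $e^{ipy}$ acts as $\delta(y)$. Equivalently, one may insert a Gaussian regulator $e^{-\lambda p^2/2}$, apply Fubini rigorously, and let $\lambda\to 0^+$; in this limit the regularised $p$-integral converges to $\delta(y)$ in $\mathcal{S}'(\mathbb{R})$, and evaluation against the kernel yields the claimed diagonal element by continuity of the trace-class kernel along the diagonal.

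The main (and only nontrivial) obstacle is therefore justifying the formal manipulations for a general trace-class $A$ rather than for operators with smooth and rapidly decaying kernels, where the computation is manifest. I would resolve this by the regulator argument above, which requires no assumption beyond $A$ being trace-class, and I would briefly remark that the identity~\eqref{Eq:wigner-marginal-zero} is the well-known $\theta=0$ case of the general Radon-transform property of the Wigner function, with the rotated statement $\int \mathrm{d}p_\theta\, W_A(q_\theta,p_\theta)=\langle q_\theta|A|q_\theta\rangle$ following by applying the identity to the rotated operator $R_\theta^\dagger A R_\theta$, where $R_\theta=e^{-i\theta \hat n}$. This rotated version is precisely what is needed to identify $t_{k,o}^{(\theta)}$ with a line integral of $W_{\sigma_k^\epsilon}$ along the conjugate quadrature, as used in the sketch of the main single-mode overlap result.
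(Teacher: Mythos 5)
Your proposal is correct and follows essentially the same route as the paper's proof: integrate over $p$ first, use $\int\mathrm{d}p\,e^{ipy}=2\pi\delta(y)$, and collapse the $y$-integral to the diagonal matrix element. The paper performs exactly this formal computation without the distributional/regulator justification you supply, so your version is simply a more careful rendering of the same argument.
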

\begin{proof}
Integrating \eqref{Eq:app-Wigner-def} over $p$ and using $\int_{-\infty}^{\infty}dp\,e^{ipy}=2\pi\delta(y)$ gives
\begin{align}
\int_{-\infty}^{\infty} \textrm{d}p W_A(q,p)
&=\int_{-\infty}^{\infty} \textrm{d}p\frac{1}{2\pi}\int_{-\infty}^{\infty} \textrm{d}y e^{ipy}
\Big\langle q-\frac{y}{2}\Big|A\Big|q+\frac{y}{2}\Big\rangle =\frac{1}{2\pi}\int_{-\infty}^{\infty} dy(2\pi)\delta(y)
\Big\langle q-\frac{y}{2}\Big|A\Big|q+\frac{y}{2}\Big\rangle =\langle q|A|q\rangle.
\end{align}
\end{proof}
Lemma~\ref{Lem:app-wigner-marginal} treats the position basis $(\theta=0)$, while our measurement protocol uses homodyne angles $\theta_b$. The next corollary extends the marginal identity to these rotate quadratures, allowing us to express $\langle q_\theta|A|q_\theta\rangle$ in terms of $W_A(q,p)$ without redefining the Wigner function.

\begin{cor}[Rotated Wigner marginal]\label{Cor:app-rotated-wigner-marginal}
Let $U_\theta:=e^{-i\theta\hat n}$ be the phase-shift unitary. In the Heisenberg picture it acts as
\begin{align}\label{Eq:app-quad-rotation}
    U_\theta^\dagger\hat qU_\theta &= \hat q\cos\theta+\hat p\sin\theta = \hat q_\theta,
\\
    U_\theta^\dagger\hat pU_\theta &= -\hat q\sin\theta+\hat p\cos\theta = \hat p_\theta.
\end{align}
Then for any trace-class operator $A$ and all $q_\theta\in\mathbb{R}$,
\begin{equation}\label{Eq:rotated-marginal}
\langle q_\theta|A|q_\theta\rangle =
\int_{-\infty}^{\infty}\mathrm{d}p_\theta
W_A\left(q_\theta\cos\theta - p_\theta\sin\theta, q_\theta\sin\theta + p_\theta\cos\theta\right).
\end{equation}
\end{cor}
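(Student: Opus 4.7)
The plan is to reduce the rotated statement to the $\theta=0$ case already handled in Lemma~\ref{Lem:app-wigner-marginal} via conjugation by the phase-shift unitary $U_\theta$. First I would use the Heisenberg identity $U_\theta^\dagger \hat q U_\theta = \hat q_\theta$ to show that, if $\ket{q}$ is a (generalised) eigenstate of $\hat q$ with eigenvalue $q$, then $U_\theta^\dagger\ket{q}$ is a (generalised) eigenstate of $\hat q_\theta$ with the same numerical eigenvalue; identifying this eigenvalue with $q_\theta$ gives $\ket{q_\theta}=U_\theta^\dagger\ket{q}$ up to a convention-dependent phase, and hence $\langle q_\theta|A|q_\theta\rangle = \langle q|\,U_\theta A U_\theta^\dagger\,|q\rangle$.

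Next, I would apply Lemma~\ref{Lem:app-wigner-marginal} to the trace-class operator $A':=U_\theta A U_\theta^\dagger$, obtaining $\langle q_\theta|A|q_\theta\rangle = \int_{-\infty}^{\infty}\mathrm{d}p\, W_{A'}(q_\theta,p)$. It then remains to establish the standard rotation-covariance of the Wigner transform, namely
\begin{equation}
W_{U_\theta A U_\theta^\dagger}(q,p) = W_A\bigl(q\cos\theta - p\sin\theta,\, q\sin\theta + p\cos\theta\bigr),
\end{equation}
after which relabeling the dummy integration variable $p\to p_\theta$ reproduces Eq.~\eqref{Eq:rotated-marginal}.

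The one nontrivial piece is this Wigner rotation identity. I would prefer to derive it without unpacking the Mehler-type position-basis kernel of $U_\theta$, instead working from the displacement-operator form $W_A(q,p)\propto \tr[A\,D(q,p)\,\Pi\,D^\dagger(q,p)]$ with $\Pi$ the parity operator, and using the elementary fact that the phase-shift unitary conjugates displacements by a rotation, $U_\theta^\dagger D(q,p) U_\theta = D(R_{-\theta}(q,p))$, while commuting with $\Pi$. Substituting and using cyclicity of the trace produces the claimed covariance. The main care needed is purely bookkeeping of the orientation of the rotation (since $U_\theta=e^{-i\theta\hat n}$ acts as $\alpha\mapsto \alpha e^{-i\theta}$ on coherent-state labels), which can be cross-checked by evaluating both sides on a coherent state $\ketbra{\alpha}{\alpha}$ with $\alpha\in\mathbb{R}$ and confirming that the displaced Gaussian rotates to the expected location $(\sqrt{2}\alpha\cos\theta,-\sqrt{2}\alpha\sin\theta)$.
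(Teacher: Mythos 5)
Your proposal is correct and follows essentially the same route as the paper's proof: identify $\ket{q_\theta}=U_\theta^\dagger\ket{q}$ so that $\langle q_\theta|A|q_\theta\rangle=\langle q|U_\theta A U_\theta^\dagger|q\rangle$, apply Lemma~\ref{Lem:app-wigner-marginal} to $U_\theta A U_\theta^\dagger$, use rotation covariance of the Wigner function, and relabel $p\to p_\theta$. The only difference is in how the covariance identity is justified --- the paper simply invokes symplectic covariance $W_{U_\theta A U_\theta^\dagger}(\v x)=W_A\left[S(\theta)^{-1}\v x\right]$, while you derive it from the displacement--parity representation; your flagged orientation bookkeeping does resolve to the paper's formula $W_{U_\theta A U_\theta^\dagger}(q,p)=W_A\left(q\cos\theta-p\sin\theta,\ q\sin\theta+p\cos\theta\right)$, as your coherent-state cross-check confirms.
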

\begin{proof}
Since $|q_\theta\rangle=U_\theta^\dagger|q\rangle$, we have $\langle q_\theta|A|q_\theta\rangle=\langle q|U_\theta A U_\theta^\dagger|q\rangle$. Applying Lemma~\ref{Lem:app-wigner-marginal} to the operator $U_\theta A U_\theta^\dagger$ gives
\begin{equation}\label{Eq:marginal-B}
\langle q_\theta|A|q_\theta\rangle
=
\int_{-\infty}^{\infty}\mathrm{d}p W_{U_\theta A U_\theta^\dagger}(q_\theta,p).
\end{equation}
Now use covariance of the Wigner function under the symplectic transformation induced by \eqref{Eq:app-quad-rotation}. Writing $\v x=(q,p)^\top$ and introducing
\begin{equation}
S(\theta):=
\begin{pmatrix}
\cos\theta & \sin\theta\\
-\sin\theta & \cos\theta
\end{pmatrix},
\end{equation}
the Heisenberg action is $U_\theta^\dagger\hat{\v x}U_\theta=S(\theta)\hat{\v x}$, and Wigner covariance reads $W_{U_\theta A U_\theta^\dagger}(\v x)=W_A\left[S(\theta)^{-1}\v x\right]$. Thus, we obtain $W_{U_\theta A U_\theta^\dagger}(q,p)=
W_A\left(q\cos\theta - p\sin\theta, q\sin\theta + p\cos\theta\right)$. Substituting this into \eqref{Eq:marginal-B} and renaming the integration variable $p\mapsto p_\theta$ gives \eqref{Eq:rotated-marginal}.
\end{proof}
We now have all the elements needed to state a theorem which provides a series representation for the single-mode overlap. 
\begin{thm}[Single-mode overlap for binned homodyne on finite-energy GKP Paulis] Let $\sigma_k^{\epsilon}$ denote the finite-energy single-mode GKP Pauli operators whose Wigner functions admit the Gaussian-mixture representation
\begin{equation}\label{Eq:mix}
W_{\sigma_k^\epsilon}(q,p) = (-1)^{\delta_{k,2}} \sum_{\v{m}\in\mathcal M_k} c_{\v{m}}^\epsilon\, s_k(\v{m})\, G_{\v{\mu}_{\v{m}}^\epsilon,\v\Sigma_W^\epsilon}(q,p), \qquad k=0,1,2,3.
\end{equation}
with $\Sigma_W^\epsilon = \sigma^2 I, \sigma^2 = \tfrac12\tanh \epsilon$ and $\mu_{\v m}^{\epsilon}$ on the contracted square lattice~\eqref{Eq:contracted-square-lattice} and with $\mathcal{M}_k, s_k(\v m),c_{\v m}^\epsilon$ given in Eqs.~\eqref{Eq:app-MK-sets}-\eqref{Eq:app-sign-patterns}-\eqref{Eq:app-cm-weights}, respectively.

Let $M_o^{(b)}$ be the two-outcome POVM element corresponding to homodyne detection of the rotated quadrature $\hat{q}_{\theta_b}$ followed by periodic binning into the sets $R_0, R_1$~[Eq.~(\ref{Eq:binning})],
\begin{equation}
    M_o^{(b)} = \int_{R_o} \textrm{d}q_{\theta_b} \ketbra{q_{\theta_b}} \quad \text{with} \quad o \in \{0,1\}.
\end{equation}
Then, the single-mode overlap admits the explicit series representation
\begin{equation}
    t^{(b)}_{k,o} = (-1)^{\delta_{k,2}} \sum_{\v{m}\in\mathcal M_k} c_{\v{m}}^\epsilon\, s_k(\v{m}) \frac12 \sum_{s\in\mathbb Z} \left[ \mathrm{erf}\left(\frac{(2s+o)\sqrt{\pi}+\frac{\sqrt{\pi}}{2}-\mu_{\theta_b,\v{m}}}{\sqrt{2}\,\sigma}\right) - \mathrm{erf}\left(\frac{(2s+o)\sqrt{\pi}-\frac{\sqrt{\pi}}{2}-\mu_{\theta_b,\v{m}}}{\sqrt{2}\,\sigma}\right) \right], \end{equation}
where $\mu_{\theta_b,\v m}$ is the projection on the lattice mean onto the measured quadrature direction $\mu_{\theta_b,\v m} = (\cos \theta_b, \sin \theta_b).\mu_{\v m}^{\epsilon}$ and the series converges absolutely.
\end{thm}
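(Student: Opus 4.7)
My plan is to chain three ingredients: (i) unpack the POVM to reduce $t_{k,o}^{(b)}$ to an integral of a rotated-quadrature diagonal matrix element; (ii) apply Corollary~\ref{Cor:app-rotated-wigner-marginal} to rewrite that matrix element as a line integral of $W_{\sigma_k^\epsilon}$ along the conjugate quadrature $p_{\theta_b}$; and (iii) use the Gaussian-mixture form~\eqref{Eq:mix} to carry out both integrations in closed form.

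\emph{Reduction.} I would first use linearity of the trace together with $M_o^{(b)}=\int_{R_o}\mathrm{d}q_{\theta_b}\,\ketbra{q_{\theta_b}}$ to write $t_{k,o}^{(b)}=\int_{R_o}\mathrm{d}q_{\theta_b}\,\bra{q_{\theta_b}}\sigma_k^\epsilon\ket{q_{\theta_b}}$. Applying Corollary~\ref{Cor:app-rotated-wigner-marginal} with $A=\sigma_k^\epsilon$ and $\theta=\theta_b$ then rewrites the integrand as a one-dimensional marginal of $W_{\sigma_k^\epsilon}$ along $p_{\theta_b}$, evaluated at the rotated phase-space point. Substituting the Gaussian-mixture representation~\eqref{Eq:mix} and (provisionally) exchanging the lattice sum with the integrals leaves, for each $\v m\in\mathcal M_k$, a single centred two-dimensional Gaussian to marginalise along $p_{\theta_b}$ and then to integrate over the periodic set $R_o$ in $q_{\theta_b}$.

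\emph{Gaussian integration.} Because $\v\Sigma_W^\epsilon=\sigma^2\iden$ is isotropic, it is invariant under the symplectic rotation $(q,p)\mapsto(q_{\theta_b},p_{\theta_b})$, so each summand of~\eqref{Eq:mix} remains a centred isotropic Gaussian in the rotated coordinates whose mean is the rotation of $\v\mu_{\v m}^\epsilon$. Marginalising over $p_{\theta_b}$ therefore collapses to a one-dimensional Gaussian in $q_{\theta_b}$ of variance $\sigma^2$ centred at
\begin{equation}
\mu_{\theta_b,\v m}=(\cos\theta_b,\sin\theta_b)\cdot\v\mu_{\v m}^\epsilon=\sech(\epsilon)\,\tfrac{\sqrt{\pi}}{2}\bigl(m_1\cos\theta_b+m_2\sin\theta_b\bigr).
\end{equation}
Applying the elementary identity $\int_a^b(2\pi\sigma^2)^{-1/2}e^{-(q-\mu)^2/(2\sigma^2)}\mathrm{d}q=\tfrac12\bigl[\erf((b-\mu)/(\sqrt{2}\sigma))-\erf((a-\mu)/(\sqrt{2}\sigma))\bigr]$ on each elementary bin $[(2s+o)\sqrt{\pi}-\tfrac{\sqrt{\pi}}{2},(2s+o)\sqrt{\pi}+\tfrac{\sqrt{\pi}}{2})$ and summing over $s\in\mathbb Z$ reproduces exactly the error-function series claimed in the theorem; the global prefactor $(-1)^{\delta_{k,2}}$, the lattice coset $\mathcal M_k$, and the sign pattern $s_k(\v m)$ are inherited directly from~\eqref{Eq:mix}.

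\emph{Absolute convergence and main obstacle.} The interchange of sum and integral is legitimised by dominated convergence: each error-function difference lies in $[0,1]$, $|s_k(\v m)|=1$, and the envelope $c_{\v m}^\epsilon=\exp[-\tfrac{\pi}{4}\tanh(\epsilon)(m_1^2+m_2^2)]$ is Gaussian in $|\v m|$, so $\sum_{\v m\in\mathcal M_k}c_{\v m}^\epsilon<\infty$ for every $\epsilon>0$ and the same bound dominates $|s_k(\v m)c_{\v m}^\epsilon\mathcal B_o(\mu_{\theta_b,\v m})|$. The main obstacle I anticipate is purely book-keeping in the integration step: one must track the symplectic rotation in Corollary~\ref{Cor:app-rotated-wigner-marginal} carefully enough to confirm that, after marginalising the conjugate quadrature, the remaining Gaussian is centred precisely at the projection $(\cos\theta_b,\sin\theta_b)\cdot\v\mu_{\v m}^\epsilon$ and not at some shifted point. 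Once that identification is in place, the theorem reduces to integrating a one-dimensional Gaussian over a countable disjoint union of translated intervals.
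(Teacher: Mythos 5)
Your proposal is correct and follows essentially the same route as the paper's own proof: reduce $t_{k,o}^{(b)}$ to an integral of the rotated-quadrature diagonal matrix element, invoke Corollary~\ref{Cor:app-rotated-wigner-marginal}, exploit the isotropy of $\v\Sigma_W^\epsilon$ to identify the one-dimensional Gaussian centred at the projection $(\cos\theta_b,\sin\theta_b)\cdot\v\mu_{\v m}^\epsilon$, and integrate bin by bin into error functions, with the Gaussian envelope $c_{\v m}^\epsilon$ justifying the interchange of sum and integral. No gaps; your dominated-convergence remark is in fact slightly more explicit than the paper's.
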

\begin{proof}
    We start by expressing $t^{(\theta)}_{k,o}$ as a phase-space integral and use Corollary~\ref{Cor:app-rotated-wigner-marginal} with $A=\sigma_k^\epsilon$:
\begin{equation}\label{eq:t_as_phase_space_integral}
t^{(\theta)}_{k,o} = \int_{R_o} \textrm{d}q_\theta
\int_{-\infty}^{\infty} dp_\theta
W_{\sigma_k^\epsilon}\left(q_\theta\cos\theta - p_\theta\sin\theta, q_\theta\sin\theta + p_\theta\cos\theta\right).
\end{equation}
Substituting Eq.~\eqref{Eq:mix} into \eqref{eq:t_as_phase_space_integral} gives
\begin{align}\label{Eq:app-mix-integral-sum}
t^{(\theta)}_{k,o} =
(-1)^{\delta_{k,2}}
\sum_{\v{m}\in\mathcal M_k}
c_{\v{m}}^\epsilon\, s_k(\v{m})\,
\int_{R_o} \textrm{d}q_\theta
\int_{-\infty}^{\infty} dp_\theta
G_{\v{\mu}_{\v{m}}^\epsilon,\sigma^2\iden}
\left(q_\theta\cos\theta - p_\theta\sin\theta, q_\theta\sin\theta + p_\theta\cos\theta\right).
\end{align}
The exchange of the sum and integrals in Eq.~\eqref{Eq:app-mix-integral-sum} is justified by the absolute convergence of the series, guaranteed by the Gaussian envelope $c_{\v m}^\epsilon$. Since the covariance is isotropic, the Gaussian kernel satisfies the orthogonal transformation property: for any orthogonal matrix $O$, $G_{\v\mu,\sigma^2\iden}(O\v x) = G_{O^T\v{\mu},\sigma^2\iden}(\v x)$. Taking $O = S(\theta)^{-1}$ (which is orthogonal) and noting that $O^\top = S(\theta)$, we define the rotated mean $\v\mu_{\v m,\theta}^\epsilon := S(\theta)\v\mu_{\v m}^\epsilon$. Then, $G_{\v{\mu}_{\v{m}}^\epsilon,\sigma^2\iden}
\left(q_\theta\cos\theta - p_\theta\sin\theta, q_\theta\sin\theta + p_\theta\cos\theta\right)=
G_{\v{\mu}_{\v{m},\theta}^\epsilon,\sigma^2\iden}(q_\theta,p_\theta)$. Writing $\v{\mu}_{\v m,\theta}^\epsilon=(\mu_{\theta,\v m},\nu_{\theta,\v m})^\top$ and using the explicit Gaussian kernel, 
\begin{equation}
G_{\v{\mu}_{\v{m},\theta}^\epsilon,\sigma^2\iden}(q_\theta,p_\theta)=
\frac{1}{2\pi\sigma^2}
\exp\left[-\frac{(q_\theta-\mu_{\theta,\v{m}})^2+(p_\theta-\nu_{\theta,\v{m}})^2}{2\sigma^2}\right].
\end{equation}
Integrating over $p_\theta$ gives the one-dimensional Gaussian density in $q_\theta$:
\begin{equation}
    \int_{-\infty}^{\infty} dp_\theta G_{\v{\mu}_{\v{m},\theta}^\epsilon,\sigma^2\iden}(q_\theta,p_\theta) = \frac{1}{\sqrt{2\pi}\sigma} \exp\left[-\frac{(q_\theta-\mu_{\theta,\v{m}})^2}{2\sigma^2}\right] =:\varphi(q_\theta;\mu_{\theta,\v{m}},\sigma^2).
\end{equation}
Note that the scalar $\v\mu_{\theta,\v m}$ is the projection of $\mu_{\v m}^{\epsilon}$ onto the measured quadrature direction $\mu_{\theta,\v{m}} = (\cos\theta, \sin \theta) . \v{\mu}_{\v m}^\epsilon$. Using the contracted-lattice form of $\v{\mu}_{\v m}^\epsilon$, this becomes $\mu_{\theta,\v m}
=
\sech(\epsilon)\,\frac{\sqrt{\pi}}{2}\left(m_1\cos\theta+m_2\sin\theta\right)$. Hence, the single-mode overlap is reduced to a 1-D Gaussian probability over $R_o$, i.e., 
\begin{align}\label{Eq:app-t-integral}
   t^{(\theta)}_{k,o} &= (-1)^{\delta_{k,2}} \sum_{\v m\in\mathcal M_k} c_{\v m}^\epsilon s_k(\v m) \int_{R_o} \textrm{d}q_\theta\;\varphi(q_\theta;\mu_{\theta,\v m},\sigma^2) = (-1)^{\delta_{k,2}} \sum_{\v m\in\mathcal M_k} c_{\v m}^\epsilon s_k(\v m)\sum_{s\in\mathbbm{Z}} \int_{r^{(-)}_{s,o}}^{r^{(+)}_{s,o}}\textrm{d}q \,\varphi(q;\mu_{\theta,\v m},\sigma^2),
\end{align}
where we used the fact that the integral over a countable union of disjoint sets equals the sum of integrals over each set $\int_{R_o} \textrm{d}q\,f(q) = \sum_{s\in \mathbbm{Z}} \int_{r^{(-)}_{s,o}}^{r^{(+)}_{s,o}} \textrm{d}q \, f(q)$. The integration limits are $r^{(\pm)}_{s,o} = (2s+o)\sqrt{\pi}\pm\frac{\sqrt{\pi}}{2}$. Note that the integral in Eq.~\eqref{Eq:app-t-integral} can be easily computed. Let $u=(q-\mu)/\sigma$, it follows that
\begin{equation}
    \int_{r^{(-)}}^{r^{(+)}} dq\;\varphi(q;\mu,\sigma^2)=\frac{1}{\sqrt{2\pi}}\int_{(r^{(-)}-\mu)/\sigma}^{(r^{(+)}-\mu)/\sigma}\textrm{d}u \,e^{-u^2/2}
=\Phi\left(\frac{r^{(+)}-\mu}{\sigma}\right)-\Phi\left(\frac{r^{(-)}-\mu}{\sigma}\right),
\end{equation}
where $\Phi(x)=\frac{1}{\sqrt{2\pi}}\int_{-\infty}^{x}e^{-u^2/2}\,du$ is the standard normal cumulative distribution function. Finally, using $\Phi(x)=\tfrac12\left[1+\mathrm{erf}\left(\frac{x}{\sqrt2}\right)\right]$, the single-mode overlap can be expressed as an error-function series
\begin{equation}
t^{(b)}_{k,o} = (-1)^{\delta_{k,2}} \sum_{\v m\in\mathcal M_k} c_{\v m}^\epsilon\, s_k(\v m)\; \frac12 \sum_{s\in\mathbb Z} \left[ \mathrm{erf}\left(\frac{(2s+o)\sqrt{\pi}+\frac{\sqrt{\pi}}{2}-\mu_{\theta_b,\v m}}{\sqrt{2}\,\sigma}\right) - \mathrm{erf}\left(\frac{(2s+o)\sqrt{\pi}-\frac{\sqrt{\pi}}{2}-\mu_{\theta_b,\v m}}{\sqrt{2}\,\sigma}\right) \right].
\end{equation}
This completes the proof of the theorem.
\end{proof}
A remark is that the inner sum over $s$ is an exact infinite series representation of the integral over $R_o$. Its rapid convergence follows from the Gaussian decay of $\varphi(q;\mu_{\theta_b, \v m},\sigma^2)$, which suppresses contributions from intervals far from the mean.

\end{document}